\newcommand{\IF}{\mbox{{\bf if}\ }}
\newcommand{\FI}{\mbox{{\bf fi}}}
\newcommand{\DO}{\mbox{{\bf do}\ }}
\newcommand{\OD}{\mbox{{\bf od}}}
\newcommand{\WHILE}{\mbox{{\bf while}\ }}
\newcommand{\END}{\mbox{{\bf end}}}
\newcommand{\THEN}{\mbox{\ {\bf then}\ }}
\newcommand{\ELSE}{\mbox{\ {\bf else}\ }}
\newcommand{\SP}{\mbox{\it SP}}
\newcommand{\T}{\mbox{{\bf true}}}
\newcommand{\ES}{\mbox{$\emptyset$}}
\newcommand{\ra}{\mbox{$\:\rightarrow\:$}}
\newcommand{\A}{\mbox{$\ \wedge\ $}}
\newcommand{\sse}{\mbox{$\:\subseteq\:$}}
\newcommand{\te}{\mbox{$\exists$}}
\newcommand{\LL}{\mbox{$\ldots$}}
\newcommand{\ITE}[3]{\mbox{$\IF {#1} \THEN {#2} \ELSE {#3}\ \FI$}}
\newcommand{\WDD}[2]{\mbox{$\WHILE {#1}\ \DO {#2}\ \OD$}}
\newcommand{\HT}[3]{\mbox{$\{{#1}\}\ {#2}\ \{{#3}\}$}}
\newcommand{\sg}{{\mbox{$\sigma$}}}
\newcommand{\BI}[1]{\mbox{$[\![{#1}]\!]_I$}}       
\newcommand{\C}[1]{\mbox{$\{{#1}\}$}}         
\newcommand{\NI}{\noindent} 
\newcommand{\HB}{\qed}
\newcommand{\III}{\vspace{3 mm}}
\def\nlni{\par\ifvmode\removelastskip\fi\vskip\baselineskip\noindent}
\newcommand{\BEGIN}{\mbox{{\bf begin}}}
\newcommand{\block}[1]{\mbox{$\BEGIN \ {#1}\ \END$}}
\newcommand{\local}{\mbox{{\bf local}\ }}
\newenvironment{Example}{\begin{example}}{\end{example}}
\newenvironment{Lemma}{\begin{lemma}}{\end{lemma}}
\newenvironment{Def}{\begin{definition}}{\end{definition}}
\newenvironment{Note}{\begin{note}}{\end{note}}
\newenvironment{Cor}{\begin{corollary}}{\end{corollary}}
\newenvironment{Theorem-HB}{\begin{theorem}}{\HB\end{theorem}}
\newenvironment{Example-HB}{\begin{Example}}{\HB\end{Example}}
\newenvironment{Lemma-HB}{\begin{Lemma}}{\HB\end{Lemma}}
\newenvironment{Def-HB}{\begin{Def}}{\HB\end{Def}}
\newenvironment{Note-HB}{\begin{Note}}{\HB\end{Note}}
\newenvironment{Cor-HB}{\begin{Cor}}{\HB\end{Cor}}
\newenvironment{Warn-HB}{\begin{Warn}}{\HB\end{Warn}}
 \newcommand{\MSI}[1]{\mbox{${\cal M}_I[\![{#1}]\!]$}}
\title{Reasoning about call-by-value: \\ a  missing result in the history of Hoare's logic}
\author{
Krzysztof R. Apt \\
{CWI, Amsterdam, The Netherlands} \\
{MIMUW, University of Warsaw, Warsaw, Poland} \\[2mm]
Frank S. de Boer \\
{CWI, Amsterdam, The Netherlands} \\
{University of Leiden, The Netherlands}
}
\institute{}%
\begin{document}


\maketitle

\begin{abstract} 
  We provide a sound and relatively complete Hoare-like proof system
  for reasoning about partial correctness of recursive procedures in
  presence of local variables and the call-by-value parameter
  mechanism, and in which the correctness proofs are linear in the length
  of the program. We argue that in spite of the fact that Hoare-like
  proof systems for recursive procedures were intensively studied, no
  such proof system has been proposed in the literature.
\end{abstract}

\section{Introduction}

\subsection{Background and motivation}

Hoare's logic was originally introduced by Tony Hoare in
\cite{Hoa69}. It is the most widely used approach to program
verification.  Over the past fifty years it was successfully extended
to several classes of programs, including parallel and object-oriented
ones, see, e.g. our textbook \cite{ABO09}.  These historical
developments are traced in the recent survey \cite{AO19}.  Also,
formalization of Hoare's logic in various interactive theorem provers,
e.g., \texttt{Coq} (see, e.g., \cite{Chl13}), led to a computer aided
verification of several programs.

One of the crucial features of Hoare's logic is its syntax-oriented
style. It allows one to annotate programs at relevant places (for
example at the entrance of each loop) with \emph{invariants} that
increase programmer's confidence in the correctness of the
program. Also, intended behaviour of procedures can be described by
means of pre- and postconditions, which simplifies program
development.  So it is highly desirable that natural and widely used
programming features can be described in Hoare's logic at an
abstraction level that coincides with the programming language.

The developments that motivate the subject of the present paper passed
through a number of crucial stages.  Already \cite{Hoa71} a proof
system for recursive procedures with parameters was proposed that was
subsequently used in \cite{FH71} to establish correctness of the
\texttt{Quicksort} program. This research got a further impulse thanks
to Stephen A.~Cook who proposed a by now standard notion of
\emph{relative completeness}, since then called \emph{completeness in
  the sense of Cook}, and established in \cite{Coo78} relative
completeness of a proof system for non-recursive procedures.  Cook's
result was extended by Gerald A.~Gorelick in \cite{Gor75}, where a
proof system for recursive procedures was introduced and proved to be
sound and relatively complete.  This line of research led to the
seminal paper \cite{Cla79} of Edmund M.~Clarke who exhibited a
combination of five programming features the presence of which makes
it impossible to obtain a Hoare-like proof system that is sound and
relatively complete.

However, what is often overlooked, all these papers assumed the by now
obsolete call-by-name parameter mechanism.  Our claim is that no paper
so far provided a sound and relatively complete Hoare-like proof
system for a programming language with the following programming
features:

\begin{itemize}
\item a system of mutually recursive procedures,

\item local variables,
  
\item call-by-value parameter mechanism,

\item both dynamic and static scope (notions explained in Section \ref{sec:syntax}).
\end{itemize}
and in which
\begin{itemize}
\item the abstraction level coincides with the programming language,

\item correctness proofs are linear in the length of the programs.
\end{itemize}

Given the above research and the fact that call-by-value is in many
programming languages the main parameter mechanism, this claim may
sound surprising.  Of course, there were several contributions to
Hoare's logic concerned with recursive procedures, but ---as we
explain in Section \ref{sec:final}--- none of them provided a
proof system that met the above criteria.  The aim of this paper is to
provide such a Hoare-like proof system.

One of the notorious problems when dealing with the above programming
features is that variables can occur in a program both as local and
global.  The way this double use of variables is dealt with has direct
consequences on what is being formalized. In particular, the proof
systems studied in \cite{Coo78} and \cite{Gor75} dealt with dynamic
scope and not static scope.  Additionally, variables can be used as
formal parameters and can occur in actual parameters.  This multiple
use of variables can lead to various subtle errors and was usually
dealt with by imposing some restrictions on the actual parameters, notably that the variables of the actual
parameters cannot be changed by the call. In
our approach no such restrictions are present but these complications
explain the seemingly excessive care we exercise when dealing with
this matter.

\subsection{Plan of the paper}

In the next section we introduce a small programming language we are
dealing with in the paper and identify a natural subset of clash-free
programs for which dynamic and static scope coincide.
Next, in Section \ref{sec:semantics}, we recall various
aspects of semantics introduced in \cite{ABO09} and establish some
properties that are used in the sequel. Then, in Section
\ref{sec:towards}, we provide some insights that motivate the crucial
proof rules of the proof system that is introduced in Section
\ref{sec:proof}.  These insights suggest that the adopted recursion
rule is derivable in a (sound) proof system of \cite{ABO09}. We prove
this fact in Section \ref{sec:pure}, which allows us to conclude that the
proof system given in Section \ref{sec:proof} is sound.

Next, in Section \ref{sec:sandc}, we establish relative completeness
of the proof system and in Section \ref{sec:length} analyze the length
of proofs in this proof system. In the final section we discuss
related work and summarize our approach.

\section{Syntax}
\label{sec:syntax}

Throughout the paper we assume a fixed first-order language $\cal L$.
\emph{Expressions} are terms in the language $\cal L$, \emph{Boolean
  expressions} are quantifier-free formulas of $\cal L$, while
\emph{assertions} are formulas of $\cal L$.  

We denote the set of all variables of $\cal L$ by \emph{Var} and for a
sequence $\bar{t}$ of expressions denote the set of all variables
occurring in $\bar{t}$ by $var(\bar{t})$.  The set of variables that
occur free in an assertion $p$ is defined in a standard way and
denoted by $free(p)$.

A (simultaneous) \emph{substitution} of a list of expressions $\bar{t}$
for a list of distinct variables $\bar{x}$ of the same length is
written as $[\bar{x}:=\bar{t}]$ and the result of applying it to an
expression or an assertion $s$ as $s[\bar{x}:=\bar{t}]$.  To ensure
a uniform presentation we allow the empty substitution in which
the list of variables $\bar{x}$ is empty.

We now move on and introduce the syntax of the programs.  For
simplicity in the considered toy programming language we admit only
simple variables (so no array or subscripted variables), all of the
same type.  \emph{Statements} are defined by the following grammar:
\[
\begin{array}{ll}
  S::= & skip \mid
         \bar{x}:=\bar{t} \mid
        P(\bar{t}) \mid
         S; \ S \mid 
         \ITE{B}{S}{S} \mid 
  \\[1mm]
      &   \WDD{B}{S}
          \mid
        \block{\local \bar{x}:=\bar{t}; \ S},
\end{array}
\]
where
\begin{itemize}
\item $\bar{x}:=\bar{t}$ is a \emph{parallel assignment}, with
  $\bar{x}$ a (possibly empty) list of distinct variables and
  $\bar{t}$ a list of expressions of the same length as $\bar{x}$;
  when $\bar{x}$ is empty we identify $\bar{x}:=\bar{t}$ with the $skip$ statement,
  
\item $P$ is a procedure name; each procedure $P$ is defined by a declaration of the form
\[
  P(\bar{u})::S,
\]
where $\bar{u}$ is a (possibly empty) list of distinct variables,
called \emph{formal parameters} of the procedure $P$, and $S$ is a
statement, called the \emph{body} of the procedure $P$,

\item $P(\bar{t})$ is a procedure call, with the \emph{actual
    parameters} $\bar{t}$, which is a (possibly empty) list of
  expressions of the same length as the corresponding list of formal
  parameters,
  
\item $B$ is a Boolean expression,

\item $\block{\local \bar{x}:=\bar{t}; \ S}$ is a \emph{block statement}
where $\bar{x}$ is a (possibly empty) list of distinct 
variables, all of which are explicitly initialized by means of the
parallel assignment $\bar{x}:=\bar{t}$.

\end{itemize}

Of special interest in our approach will be \emph{generic calls} which
are calls in which the actual parameters coincide with the formal
ones.

By a \emph{program} we mean a pair $(D \mid S)$, where $S$ is a
statement, called the \emph{main statement} and $D$ is a set of
procedure declarations such that each procedure (name) that appears in
$S$ or $D$ has exactly one procedure declaration in $D$.  So we allow
mutually recursive procedures but not nested procedures.  In what
follows we assume that the set of procedure declarations $D$ is given
and fixed and whenever no confusion arises we omit the references to it
when discussing specific (main) statements.
We denote by $var(D \mid S)$ the set of variables that occur in 
$(D \mid S)$.

The parallel assignment plays a crucial role in the way procedure
calls are dealt with: the procedure call $P(\bar{t})$, where $P$ is
declared by $P(\bar{u})::S$, is interpreted as the block statement
$\block{\local \bar{u}:=\bar{t}; \ S}$, where $\bar{u}:=\bar{t}$
models the parameter passing by value and the block statement ensures
that the changes to the formal parameters $\bar{u}$ are local. Such a
replacement of a procedure call by an appropriately modified procedure
body is called \emph{inlining} or a \emph{copy rule}.

In our setup so interpreted inlining results in \emph{dynamic scope},
which means that each procedure call is evaluated in the environment
in which it is called.  The simplest example is the parameterless
procedure $P$ declared by $P() :: y:=x$ and the main statement
$x:=0;\ \block{\local x:=1; \ P()}$. Here the inlining results in the
program
\[
  x:=0;\ \block{\local x:=1; \ \block{\local skip; \ y:=x}}
\]
that yields $y=1$ upon termination. However, if we renamed the
occurrence of $x$ in the block statement to a fresh variable, say
$x'$, and thus used the statement
$x:=0;\ \block{\local x':=1; \ P()}$, then inlining would result in
the program
\[
  x:=0;\ \block{\local x':=1; \ \block{\local skip; \ y:=x}}
\]
that yields $y=0$ upon termination. In the latter situation dynamic
scope coincides with the \emph{static scope}, which means that each
procedure call is evaluated in the environment in which it is
declared.

Above example shows that static scope can be ensured
when certain variable name clashes are avoided.
This can be made precise as follows.

\begin{definition}
\mbox{} \\[-3mm]
  \begin{itemize}
  \item An occurrence of a variable in a statement $S$ is called
  \emph{local} if it occurs inside a block statement
  $\block{\local \bar{x}:=\bar{t}; \ S_1}$ within $S$, either in the
  list $\bar{x}$ or in $S_1$.

\item An occurrence of a variable in a procedure declaration
  $ P(\bar{u})::S$ is called \emph{local} if it is in the
  list $\bar{u}$ or its occurrence in $S$ is local.

\item An occurrence of a variable in a program $(D \mid S)$
  is called \emph{local} if it is a local occurrence in $S$ or
  in one of the procedure declarations in $D$.

\item An occurrence of a variable in a program is called \emph{global} if it is not
  local.
\HB
\end{itemize}

\end{definition}

The following
example should clarify this definition.

\begin{example} \label{exa:nested}
  Consider the nested block statement
  \[
    S \equiv \block{\local u,y:=0,u+1; \ \block{\local u:=u+2; \ P(u+y)}}.
  \]
Out of 5 occurrences of $u$ in it only the one in $u+1$ is global.
In turn, both occurrences of $y$ are local.

Further, in the procedure declaration
\[
  P(u) :: \block{\local u:=x; \ y:=x}; \ u:=1
\]
all occurrences of $u$ are local, while all occurrences of $x$ and
$y$ are global.
\HB
\end{example}

Note that in the statement $\block{\local \bar{x}:=\bar{t}; \ S}$ the
occurrences of the variables from $\bar{x}$ in $\bar{t}$ are not
considered as local. (They can be local if this block statement is enclosed
by another block statement in which the same variable is declared.)
Thanks to this a global  variable occurring in 
a procedure call $P(\bar{t})$ remains global after inlining takes place.
Informally, such global occurrences of variables are not 'captured' by the
block statement used to model inlining, and thus no renaming of variables is needed.

We now introduce an important class of programs.

\begin{definition} \label{def:clash-free}
  A program is called \emph{clash-free} if no variable has both a
  local occurrence in it and a global occurrence in a procedure body.
\end{definition}

The following observation shows that clash-freeness is preserved by
inlining.

\begin{note}
  Suppose that $(D \mid S_1)$ is a clash-free program and that $S_2$
  results from $S_1$ by replacing a procedure call $P(\bar{t})$ by
  $\block{\local \bar{u}:=\bar{t}; \ S}$, where
  $P$ is declared by $P(\bar{u})::S$.  Then $(D \mid S_2)$ is
  clash-free.
\end{note}

\begin{proof}
  The programs $(D \mid S_2)$ and $(D \mid S_1)$ use the same
  variables.  Each variable $x$ that has a local occurrence in $S_2$
  has either a local occurrence in $S_1$ or a local occurrence in $S$
  or is an element of the list $\bar{u}$.  In all three cases these
  are local occurrences in $(D \mid S_1)$, so $x$ does not have a
  global occurrence in a procedure body.
  \HB
\end{proof}

Informally, this observation states that inlining preserves the
separation between the local variables of the program and the global
variables of the procedures.  This means that the restriction to
clash-free programs guarantees that their dynamic and static scope interpretations coincide.
So a programmer in the considered programming language can ensure
static scope by adhering to a simple syntactic convention.


In our considerations it will be important to refer to the set of
variables that a given program can change. Of course, this set depends
on the initial state in which the program is executed but it suffices to use
its approximation from above. This can be done in a number of ways that lead
to possibly different outcomes.
For procedure calls, for example,  it would be natural to put
\[
change(D\mid P(\bar{t})) := change(D\mid S) \setminus \{\bar{u}\}, 
\]
where $P$ is declared by $P(\bar{u})::S$, but such a definition would
be circular in the presence of recursive calls. 
This could be rectified by an appropriate fixpoint definition.
In our setting the following more direct
definition will do.
Given a program $(D \mid S)$ we define inductively the set of variables
that can be changed by it as follows:

\begin{itemize}

\item $change(D \mid skip) := \ES$,
  
\item $change(D \mid \bar{x}:=\bar{t}) := \{\bar{x}\}$,

\item
  \[change(D \mid P(\bar{t})) :=
    \begin{cases}
    change(D) \setminus \{\bar{u}\} & \textrm{ if }  P(\bar{u})::S \in D \\
    \ES & \textrm{ otherwise}
  \end{cases}
\]


\item $change(D \mid S_1 ; \ S_2) := change(D \mid S_1) \cup change(D \mid S_2)$,

\item $change(D \mid \ITE{B}{S_1}{S_2}) := change(D \mid S_1) \cup change(D \mid S_2)$,  

\item $change(D \mid \WDD{B}{S}) := change(D \mid S)$,

\item $change(D \mid \block{\local \bar{x}:=\bar{t}; \ S}) := change(D \mid S) \setminus \{\bar{x}\}$,

\item $change(P(\bar{u})::S) := change(\ES \mid S)$,

\item $change(\{P(\bar{u})::S\} \cup D) := change(P(\bar{u})::S) \cup change(D)$.



\end{itemize}

In the above definition the crucial clause is the one concerning the procedure calls. The first
alternative 
formalizes the idea that the formal parameters are local
w.r.t.~procedure body and as a result they cannot change during the
execution of the call. The second alternative is needed to deal with
the procedure calls in the definition of $change(P(\bar{u})::S)$.

\section{Semantics}
\label{sec:semantics}

In this section we gather various facts concerning semantics. We begin
by a slightly adjusted presentation extracted from \cite{ABO09},
followed by a collection of various properties that will be needed
later.

As already mentioned, we assume for simplicity that all variables are
of the same type, say $T$. Each realization of this type (called a
\emph{domain)} yields an \emph{interpretation} $I$ that assigns a
meaning to the function symbols and relation symbols of the language
$\cal L$.

Fix an intepretation $I$. By a \emph{state} we mean a function that
maps each variable to an element of the domain of $I$.  We denote the
set of states by $\Sigma_I$.

An \emph{update} $\sigma[x:=d]$ of a state $\sigma$, where $x$ is a
variable and $d$ an element of the domain of $I$, is a state that
coincides with $\sigma$ on all variables except $x$ to which it
assigns $d$.  An \emph{update} $\sigma[\bar{x}:=\bar{d}]$ of a state
$\sigma$, where $\bar{x}$ is a list of variables and $\bar{d}$ is a list
of domain elements of the same length, is defined analogously.

Given a state $\sigma$ and an expression $t$ we define the element $\sigma(t)$ of
the domain of $I$ inductively in the standard way and for $\bar{t} := t_1, \LL, t_k$ we put
$\sigma(\bar{t}) := (\sigma(t_1), \LL, \sigma(t_k))$. 

For a set $Z$ of variables we denote by $\sg[Z]$ the \emph{restriction} of the
state $\sg$ to the variables occurring in $Z$ and write for two states $\sg$ and $\tau$
\[
\sg = \tau \ {\bf mod} \ Z  
\]
if $\sg[\mathit{Var}\setminus Z] = \tau[\mathit{Var}\setminus Z]$.
We extend the definitions of an update and equality ${\bf mod} \ Z$
 to, respectively,
a set of states and sets of states in the expected way.

The relation `assertion $p$ is true in the state $\sigma \in \Sigma_I$', denoted by
$\sigma \models_I p$, is defined inductively in the standard way.  
The \emph{meaning} of an assertion w.r.t.~the interpretation $I$,
written as $\BI{p}$, is defined by
\[ 
 \BI{p}=\C{\sg \in \Sigma_I \mid \sg \models_I p}. 
\]
We say that $p$ is \emph{true in $I$}, and write $\models_I p$, if
$\BI{p}=\Sigma_I$.

The \emph{meaning} of a program $(D \mid S)$ is a function
\[
  \MSI{S}: \Sigma_I \to {\cal P}(\Sigma_I)
\]
the definition of which, given in \cite{ABO09}, is omitted.

For a given state $\sg$, $\MSI{S}(\sigma) = \{\tau\}$ states the fact
that the program $(D \mid S)$ terminates when started in the initial
state $\sigma$, yielding the final state $\tau$.  If $(D \mid S)$ does
not terminate when started in $\sigma$, then $\MSI{S}(\sg)$ is the
empty set.

We extend the function $\MSI{S}$ to deal with sets of states $X \sse \Sigma_I$ by
\[ 
\MSI{S}(X)= \bigcup_{\sg \in X} \MSI{S}(\sg).
\]

The following lemma collects various consequences of the semantics
given in \cite{ABO09} that will be used below.

\begin{lemma} \label{lem:sem}
\mbox{} \\[-3mm]
\begin{description}
  
\item[Skip] For all states $\sigma$
  \[
    \MSI{skip}(\sigma) = \{\sigma\}.
  \]

\item[Assignment] For all states $\sigma$
  \[
    \MSI{\bar{x}:=\bar{t}}(\sigma) = \{\sigma[\bar{x}:=\sigma(\bar{t})]\}.
  \]

\item[Composition] For all states $\sigma$
  \[
    \MSI{S_1; \ S_2}(\sigma) = \MSI{S_2}(\MSI{S_1}(\sigma)).
  \]
  
  \item[Block] For all states $\sigma$
    \[
      \MSI{\block{\local \bar{x}:=\bar{t}; \ S}}(\sigma) = \MSI{\bar{x}:=\bar{t}; \ S}(\sigma)[\bar{x}:=\sigma(\bar{x})].
    \]
    
      \item[Inlining]
For a procedure $P$ declared by $P(\bar{u})::S$
\[
  \MSI{P(\bar{t})} = \MSI{\block{\local \bar{u}:=\bar{t}; \ S}}.
\]

\item[Access and Change] For all states $\sigma$ and $\tau$
  if $\MSI{S}(\sg) = \{\tau\}$ then
\[
\tau[\mathit{Var}\setminus change(D \mid S )] = \sigma[\mathit{Var}\setminus change(D \mid S )].
\]

%
%
\end{description}
\end{lemma}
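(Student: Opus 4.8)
The plan is to prove each clause by appeal to the definition of $\MSI{\cdot}$ given in \cite{ABO09}. Since that semantics is defined via a transition relation on configurations, each of the six clauses reduces to a transition-level argument. The first three clauses (\textbf{Skip}, \textbf{Assignment}, \textbf{Composition}) are immediate: \textbf{Skip} and \textbf{Assignment} follow from a single transition step in the operational semantics (for \textbf{Assignment}, the configuration $\langle \bar x := \bar t, \sigma\rangle$ steps to $\langle E, \sigma[\bar x := \sigma(\bar t)]\rangle$), and \textbf{Composition} follows from the standard fact that computations of $S_1;\ S_2$ from $\sigma$ are concatenations of a computation of $S_1$ from $\sigma$ with a computation of $S_2$ from the resulting state; one then pushes the union over $\MSI{S_1}(\sigma)$ through, using the extension of $\MSI{\cdot}$ to sets of states.

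For \textbf{Block}, I would unfold the operational semantics of the block statement: executing $\block{\local \bar x := \bar t;\ S}$ from $\sigma$ first performs the assignment $\bar x := \bar t$, then runs $S$, and on exit restores the values of $\bar x$ to what they were in $\sigma$. Hence a terminating computation yields, for each $\tau \in \MSI{\bar x := \bar t;\ S}(\sigma)$, the state $\tau[\bar x := \sigma(\bar x)]$, which is exactly the right-hand side once we recall that the update operation has been extended to sets of states. Non-termination is preserved on both sides. The \textbf{Inlining} clause is then essentially a restatement: by the definition in \cite{ABO09}, a procedure call $P(\bar t)$ is evaluated by precisely the transition that replaces it with $\block{\local \bar u := \bar t;\ S}$, so the two meaning functions coincide pointwise on every state.

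The clause that requires real work is \textbf{Access and Change}, and I expect it to be the main obstacle. I would prove it by induction on the structure of $S$, strengthened so that it also covers computations that still have procedure declarations $D$ in scope; for the procedure-call case one needs a secondary induction on the length of the terminating computation (or equivalently an induction following the unfolding of recursive calls), since $change$ is defined directly rather than as a fixpoint. The base cases $skip$ and $\bar x := \bar t$ are clear from \textbf{Skip} and \textbf{Assignment}. The cases for sequential composition, conditional, and while-loop follow from the induction hypothesis together with the corresponding defining clauses of $change$, noting that the set of changeable variables only grows as we pass to subprograms. The block case uses \textbf{Block}: any variable in $\bar x$ is restored to its $\sigma$-value on exit, so it does not appear in the effective change set, matching $change(D \mid S) \setminus \{\bar x\}$. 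The procedure-call case is the delicate one: by \textbf{Inlining} and the block case, the call $P(\bar t)$ changes at most $change(D \mid S) \setminus \{\bar u\}$, and one checks that this is contained in $change(D) \setminus \{\bar u\}$, which is the first alternative in the definition of $change(D \mid P(\bar t))$; here the induction on computation length handles the circularity that would otherwise arise from $S$ itself containing calls to procedures in $D$.
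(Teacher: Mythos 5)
Your plan is sound, but note that the paper itself offers no proof of Lemma~\ref{lem:sem} at all: the definition of $\MSI{S}$ is explicitly omitted and the lemma is presented as a list of ``consequences of the semantics given in'' the cited reference, so there is nothing in the paper to compare your argument against line by line. Your reconstruction is nevertheless the right one. The first five clauses are indeed immediate unfoldings of the transition rules (the block rule restores $\bar{x}$ to $\sigma(\bar{x})$ on exit, and the procedure-call rule is exactly the copy rule, so \textbf{Block} and \textbf{Inlining} are read off directly), and you correctly single out \textbf{Access and Change} as the only clause needing a genuine induction. Your handling of the circularity there is also right: structural induction alone fails at $P(\bar{t})$ because the body is not a structural subterm, so one must induct on the length of the terminating computation; the remaining obligation is the inclusion $change(D \mid S) \subseteq change(D)$ for a procedure body $S$, which you invoke implicitly when you pass from $change(D \mid S)\setminus\{\bar{u}\}$ to $change(D)\setminus\{\bar{u}\}$ --- this holds because every variable in $change(D \mid S)$ either is directly assigned in $S$ outside a block declaring it (hence lies in $change(\emptyset \mid S)\subseteq change(D)$) or is contributed by a nested call, whose contribution is already a subset of $change(D)$; it deserves to be stated as a separate small claim, but it is not a gap.
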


The \textbf{Access and Change} item 
formalizes the intuition that when executing a program
$(D \mid S)$ only variables in $change(D \mid S )$ can be modified.
It can be equivalently stated as
that $\MSI{S}(\sg) \neq \ES$ implies
    \[
      \MSI{S}(\sg) = \{\sg\}\ {\bf mod} \ change(D \mid S ).
    \]


\begin{corollary} \label{cor:conc}
Suppose that the procedure $P$ is declared by $P(\bar{u})::S$.
\begin{enumerate}[(i)]

\item 
  For all states $\sigma$ 
\[
  \MSI{P(\bar{t})}(\sg) =
  \MSI{S}(\sigma[\bar{u}:=\sigma(\bar{t})])[\bar{u}:=\sigma(\bar{u})].
\]

\item $\MSI{P(\bar{t})} = \MSI{\block{\local \bar{u}:=\bar{t}; \ P(\bar{u})}}$.

\end{enumerate}
\end{corollary}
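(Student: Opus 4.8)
The plan is to prove both parts by unwinding the relevant clauses of Lemma~\ref{lem:sem}; the only real work is the bookkeeping of state updates.

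For part (i) I would simply chain four items of Lemma~\ref{lem:sem}. Starting from $\MSI{P(\bar t)}$, the \textbf{Inlining} clause gives $\MSI{P(\bar t)} = \MSI{\block{\local \bar u:=\bar t; \ S}}$; the \textbf{Block} clause then yields, for every $\sigma$, that this equals $\MSI{\bar u:=\bar t; \ S}(\sigma)[\bar u:=\sigma(\bar u)]$; the \textbf{Composition} clause rewrites $\MSI{\bar u:=\bar t; \ S}(\sigma)$ as $\MSI{S}(\MSI{\bar u:=\bar t}(\sigma))$; and the \textbf{Assignment} clause gives $\MSI{\bar u:=\bar t}(\sigma) = \{\sigma[\bar u:=\sigma(\bar t)]\}$. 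Combining these, and using that $\MSI{S}$ applied to a singleton is $\MSI{S}$ applied to its element, produces exactly $\MSI{S}(\sigma[\bar u:=\sigma(\bar t)])[\bar u:=\sigma(\bar u)]$, as required.

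For part (ii) I would compute $\MSI{\block{\local \bar u:=\bar t; \ P(\bar u)}}(\sigma)$ by the same three structural clauses (\textbf{Block}, \textbf{Composition}, \textbf{Assignment}), obtaining $\MSI{P(\bar u)}(\sigma[\bar u:=\sigma(\bar t)])[\bar u:=\sigma(\bar u)]$. Setting $\tau := \sigma[\bar u:=\sigma(\bar t)]$, I would invoke part (i) in the special case of the generic call $P(\bar u)$, i.e. $\MSI{P(\bar u)}(\rho) = \MSI{S}(\rho)[\bar u:=\rho(\bar u)]$ for all $\rho$ (valid since $\rho[\bar u:=\rho(\bar u)] = \rho$), to get $\MSI{P(\bar u)}(\tau) = \MSI{S}(\tau)[\bar u:=\tau(\bar u)] = \MSI{S}(\tau)[\bar u:=\sigma(\bar t)]$, because $\tau(\bar u) = \sigma(\bar t)$. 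Hence $\MSI{\block{\local \bar u:=\bar t; \ P(\bar u)}}(\sigma) = \MSI{S}(\tau)[\bar u:=\sigma(\bar t)][\bar u:=\sigma(\bar u)] = \MSI{S}(\tau)[\bar u:=\sigma(\bar u)]$, the last equality because a second update of the distinct variables $\bar u$ shadows the first. By part (i) this last expression is $\MSI{P(\bar t)}(\sigma)$, and since $\sigma$ was arbitrary, (ii) follows.

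The argument is essentially routine; the one place that needs a little care is the update algebra in part (ii) — specifically the two facts that $\rho[\bar u:=\rho(\bar u)] = \rho$ and that $X[\bar u:=\bar d\,][\bar u:=\bar e\,] = X[\bar u:=\bar e\,]$ for a set of states $X$, both immediate from the definition of update but precisely what reconciles the superficially different right-hand sides of (i) and (ii). No appeal to \textbf{Access and Change} is needed.
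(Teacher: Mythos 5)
Your proof is correct and follows essentially the same route as the paper: part (i) is the same chain of the \textbf{Inlining}, \textbf{Block}, \textbf{Composition} and \textbf{Assignment} items of Lemma~\ref{lem:sem}, and part (ii) is the same computation via the state $\sigma'=\sigma[\bar u:=\sigma(\bar t)]$ followed by two applications of (i). The only difference is that you make explicit the two update identities ($\rho[\bar u:=\rho(\bar u)]=\rho$ and the shadowing of a repeated update of $\bar u$) that the paper uses silently, which is a harmless, indeed welcome, extra bit of care.
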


\begin{proof}
\mbox{} 

\NI
$(i)$ By the appropriate items of Lemma \ref{lem:sem} we successively have
\[
\begin{array}{l}
  \MSI{P(\bar{t})}(\sigma) = \\
  \MSI{\block{\local \bar{u}:=\bar{t}; \ S}}(\sigma) = \\
  \MSI{\bar{u}:=\bar{t}; \ S}(\sigma)[\bar{u}:=\sigma(\bar{u})] = \\
  \MSI{S}(\sigma[\bar{u}:=\sigma(\bar{t}])[\bar{u}:=\sigma(\bar{u})].
\end{array}
\]
  
\NI
$(ii)$  Take a state $\sigma$. Denote $\sigma[\bar{u}:=\sigma(\bar{t})]$ by
$\sigma'$. By the appropriate items of Lemma \ref{lem:sem} and $(i)$ we
successively have
\[
\begin{array}{l}
  \MSI{\block{\local \bar{u}:=\bar{t}; \ P(\bar{u})}}(\sigma) = \\
  \MSI{\bar{u}:=\bar{t}; \ P(\bar{u})}(\sigma)[\bar{u}:=\sigma(\bar{u})] = \\
  \MSI{P(\bar{u})}(\sigma')[\bar{u}:=\sigma(\bar{u})] = \\
  \MSI{S}(\sigma'[\bar{u}:=\sigma'(\bar{u})])[\bar{u}:=\sigma(\bar{u})] = \\
  \MSI{S}(\sigma')[\bar{u}:=\sigma(\bar{u})] = \\
  \MSI{P(\bar{t})}(\sigma).
\end{array}
\]
\HB
\end{proof}

  


By a \emph{correctness formula} we mean a triple $\HT{p}{S}{q}$, where $p, q$
are assertions and $S$ is a statement. Given a program
$(D \mid S)$, a correctness formula $\HT{p}{S}{q}$, and an
interpretation $I$, we write
\[
  I \models \HT{p}{S}{q}
\]
if
\[
  \MSI{S}(\BI{p}) \sse \BI{q}.
\]
We say then that $\HT{p}{S}{q}$ is true in $I$ \emph{in the sense of partial correctness}.


\begin{lemma} \label{lem:q}
  Suppose that for all states $\sg$
\[
  \MSI{S}(\sg) = \MSI{T}(\sigma) \ {\bf mod} \ \{\bar{u}\}.
\]
Then for all assertions $p$ and $q$ such that $\C{\bar{u}} \cap free(q) = \ES$
\[
\mbox{$I \models \HT{p}{S}{q}$ iff $I \models \HT{p}{T}{q}$.}
\]
\end{lemma}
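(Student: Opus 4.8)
The plan is to reduce the biconditional to the definition of truth of a correctness formula, namely that $I \models \HT{p}{S}{q}$ means $\MSI{S}(\BI{p}) \sse \BI{q}$, and then exploit the fact that $\MSI{S}(\sg)$ and $\MSI{T}(\sg)$ differ only on the variables $\bar{u}$, which do not occur free in $q$. Since the two directions are symmetric (the hypothesis is symmetric in $S$ and $T$), I would prove just one implication, say: if $I \models \HT{p}{T}{q}$ then $I \models \HT{p}{S}{q}$. So assume $\MSI{T}(\BI{p}) \sse \BI{q}$ and take an arbitrary $\sg \in \BI{p}$ and $\tau \in \MSI{S}(\sg)$; the goal is $\tau \models_I q$.

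The key step is the following: by the hypothesis $\MSI{S}(\sg) = \MSI{T}(\sg) \ {\bf mod} \ \{\bar{u}\}$, from $\tau \in \MSI{S}(\sg)$ we obtain some $\tau' \in \MSI{T}(\sg)$ with $\tau = \tau' \ {\bf mod} \ \{\bar{u}\}$, i.e. $\tau$ and $\tau'$ agree on all variables outside $\{\bar{u}\}$. Since $\tau' \in \MSI{T}(\BI{p}) \sse \BI{q}$, we have $\tau' \models_I q$. Now I would invoke the standard coincidence lemma for first-order satisfaction: the truth of an assertion $q$ in a state depends only on the values the state assigns to the variables in $free(q)$. Because $\C{\bar{u}} \cap free(q) = \ES$, the states $\tau$ and $\tau'$ agree on all of $free(q)$, hence $\tau \models_I q$ as well. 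This gives $\tau \in \BI{q}$, completing this direction, and the other direction is obtained by swapping the roles of $S$ and $T$ (the hypothesis and the assumption on $free(q)$ are both symmetric).

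One small point to handle carefully is the precise reading of the mod-relation at the level of sets of states: the statement $\MSI{S}(\sg) = \MSI{T}(\sigma) \ {\bf mod} \ \{\bar{u}\}$ has been extended to sets of states "in the expected way" in Section \ref{sec:semantics}, so I would spell out that it means every element of one set agrees, outside $\{\bar{u}\}$, with some element of the other, and vice versa — which is exactly what is used above to pass from $\tau$ to $\tau'$. I do not expect any real obstacle here; the only thing worth stating explicitly is the appeal to the coincidence lemma for $\models_I$, which is the one external ingredient not proved in the excerpt but entirely standard for first-order logic.
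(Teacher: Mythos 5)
Your proposal is correct and follows essentially the same route as the paper's proof: both reduce the claim to the definition $\MSI{S}(\BI{p})\sse\BI{q}$, use the hypothesis to relate $\MSI{S}(\BI{p})$ and $\MSI{T}(\BI{p})$ modulo $\{\bar{u}\}$, and invoke the Coincidence Lemma to transfer satisfaction of $q$ between states agreeing outside $\{\bar{u}\}$. The only difference is presentational: you argue state-by-state while the paper phrases the same argument at the level of sets of states.
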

\begin{proof}
By the first assumption
  \begin{equation}
    \label{equ:BI}
 \MSI{S}(\BI{p}) =  \MSI{T}(\BI{p}) \ {\bf mod} \ \{\bar{u}\}.    
  \end{equation}

  By the second assumption
  $free(q) \sse \textit{Var} \setminus \C{\bar{u}}$, so for arbitrary
  states $\sigma$ and $\tau$ such that
  $\sigma = \tau \ {\bf mod} \ \C{\bar{u}}$ we have
  $\sigma \models_I q$ iff $\tau \models_I q$.  (This is actually the
  Coincidence Lemma in \cite[p.~47]{ABO09}.)  Hence for two sets of
  states $X$ and $Y$ such that $X=Y\ {\bf mod} \ \C{\bar{u}}$ we have
\[
  X \sse \BI{q} \mathrm{\ iff \ }   Y \sse \BI{q}.
\]
So the desired equivalence follows by (\ref{equ:BI})
and the definition of $I \models \HT{p}{S}{q}$.
\HB
\end{proof}

\begin{corollary} \label{cor:ht}
For all assertions $p$ and $q$ such that $\C{\bar{u}} \cap free(q) = \ES$

  \begin{enumerate}[(i)]
  \item 
$I \models \HT{p}{\block{\local \bar{u}:=\bar{t}; \ S}}{q}$ iff $I \models \HT{p}{\bar{u}:=\bar{t}; \ S}{q}$.
  
\item
  $I \models \HT{p}{P(\bar{t})}{q}$ iff $I \models \HT{p}{\bar{u} := \bar{t}; \ S}{q}$,
 
where the procedure $P$ is declared by $P(\bar{u})::S$.
  \end{enumerate}

\end{corollary}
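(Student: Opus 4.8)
The plan is to reduce both parts to Lemma~\ref{lem:q}, whose hypothesis asks precisely that the two statements being compared have, in every state, meanings that agree modulo $\{\bar{u}\}$.

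For part $(i)$, I would start from the \textbf{Block} clause of Lemma~\ref{lem:sem}, which gives, for every state $\sigma$,
\[
\MSI{\block{\local \bar{u}:=\bar{t}; \ S}}(\sigma) = \MSI{\bar{u}:=\bar{t}; \ S}(\sigma)[\bar{u}:=\sigma(\bar{u})].
\]
The key observation is that the update $[\bar{u}:=\sigma(\bar{u})]$ alters only the variables in $\bar{u}$, so each state in the right-hand set coincides, modulo $\{\bar{u}\}$, with its un-updated counterpart; extending this to sets of states we obtain
\[
\MSI{\block{\local \bar{u}:=\bar{t}; \ S}}(\sigma) = \MSI{\bar{u}:=\bar{t}; \ S}(\sigma) \ {\bf mod} \ \{\bar{u}\}.
\]
This is exactly the hypothesis of Lemma~\ref{lem:q} with the statement $S$ taken to be $\block{\local \bar{u}:=\bar{t}; \ S}$ and the statement $T$ taken to be $\bar{u}:=\bar{t}; \ S$; since by assumption $\C{\bar{u}} \cap free(q) = \ES$, the lemma yields the claimed equivalence.

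For part $(ii)$, I would use the \textbf{Inlining} clause of Lemma~\ref{lem:sem}, namely $\MSI{P(\bar{t})} = \MSI{\block{\local \bar{u}:=\bar{t}; \ S}}$, which makes $I \models \HT{p}{P(\bar{t})}{q}$ and $I \models \HT{p}{\block{\local \bar{u}:=\bar{t}; \ S}}{q}$ literally the same assertion; then part $(i)$ closes the argument. I do not anticipate a genuine obstacle here: the only spot needing a moment's care is the passage from ``equality followed by an update of $\bar{u}$'' to ``equality ${\bf mod} \ \{\bar{u}\}$'', and this is immediate from the definitions of an update and of $=\ {\bf mod}\ Z$, lifted pointwise to sets of states. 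Everything else is a direct chaining of Lemma~\ref{lem:sem}, Lemma~\ref{lem:q}, and the hypothesis on $free(q)$.
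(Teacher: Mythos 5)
Your proof is correct and follows essentially the same route as the paper: both parts reduce to Lemma~\ref{lem:q} via the \textbf{Block} and \textbf{Inlining} items of Lemma~\ref{lem:sem}, with the observation that the final update $[\bar{u}:=\sigma(\bar{u})]$ only changes the variables in $\bar{u}$ and hence yields equality ${\bf mod}\ \{\bar{u}\}$. The only cosmetic difference is that you route part $(ii)$ through part $(i)$ after applying \textbf{Inlining}, whereas the paper states the ${\bf mod}\ \{\bar{u}\}$ equation for $\MSI{P(\bar{t})}$ directly; the substance is identical.
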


\begin{proof}
By the \textbf{Block} and \textbf{Inlining} items of Lemma \ref{lem:sem}
\[
    \MSI{\block{\local \bar{u}:=\bar{t}; \ S}}(\sg) =  \MSI{\bar{u}:=\bar{t}; \ S}(\sg) \ {\bf mod} \ \{\bar{u}\}
\]
and

\[
  \MSI{P(\bar{t})}(\sg) = \MSI{\bar{u}:=\bar{t}; \ S}(\sigma) \ {\bf mod} \ \{\bar{u}\},
\]
so the claim follows by Lemma \ref{lem:q}.
  \HB
\end{proof}

Following \cite{Coo78} we now introduce the following notions. Denote
by $sp_I(p,S)$ the \emph{strongest postcondition} of a program
$(D \mid S)$ w.r.t.~an assertion $p$, defined by
\[
sp_I(p,S) = \MSI{S}(\BI{p}).
\]
So $sp_I(p,S)$ is the set of states that can be reached by executing
$(D \mid S)$ starting in a state satisfying $p$.

We say that a set of states $\Sigma$ is \emph{definable} in an
interpretation $I$ iff for some formula $\phi$ of $\cal L$ we have
$\Sigma = \BI{\phi}$. We say then that $\phi$ \emph{defines} $\Sigma$.
Further, we say that the language $\cal L$ is \emph{expressive}
relative to an interpretation $I$ if for every assertion $p$ and
program $(D \mid S)$ the set of states $sp_I(p,S)$ is definable in
$I$. In that case we denote by $\SP_I(p,S)$ a formula that defines the
set $sp_I(p,S)$.

Consider a proof system $PS$ for proving correctness formulas. 
Given a (possibly empty) set of correctness formulas $\Phi$ and an interpretation $I$ we write
\[
\Phi \vdash_{{PS}, I} \HT{p}{S}{q}
\]
to denote the fact that $\HT{p}{S}{q}$ can be proved in $PS$ from
$\Phi$ and the set of all assertions true in $I$ (that can be used as
premises in the CONSEQUENCE rule introduced in Section \ref{sec:proof}), and
omit $\Phi$ if it is empty.  We say that

\begin{itemize}
\item $PS$ is \emph{sound} if for every interpretation $I$ and
  correctness formula $\HT{p}{S}{q}$,
  \[
    \mbox{$\vdash_{{PS}, I} \HT{p}{S}{q}$ implies $I \models \HT{p}{S}{q}$,}
  \]

\item $PS$ is \emph{relatively complete}, or \emph{complete in the
    sense of Cook}, if for every interpretation $I$ such that $\cal L$
  is expressive relative to $I$ and correctness formula
  $\HT{p}{S}{q}$,
\[
  \mbox{$I \models \HT{p}{S}{q}$ implies $\vdash_{{PS}, I} \HT{p}{S}{q}$.}
\]
\end{itemize}

\section{Towards proofs of linear length}
\label{sec:towards}

To reason about correctness of programs defined in Section \ref{sec:syntax}
we used in \cite{ABO09} the following recursion rule:
\III

\NI
\[
\begin{array}{l}
\HT{p_1}{P_1(\bar{t}_1)}{q_1},\ldots,\HT{p_k}{P_k(\bar{t}_k)}{q_k} \vdash \HT{p}{S}{q},                    \\
\HT{p_1}{P_1(\bar{t}_1)}{q_1},\ldots,\HT{p_k}{P_k(\bar{t}_k)}{q_k} \vdash \\
\qquad \HT{p_i}{\block{\local \bar{u}_i:=\bar{t}_i;\  S_i}}{q_i}, \ i \in \{1, \LL, k\} \\
[-\medskipamount]
\hrulefill                                                      \\
\HT{p}{S}{q} 
\end{array}
\]
where $P_i(\bar{u}_1) ::S_i\in D$ for $i \in \{1, \LL, k\}$ and
$P_1(\bar{t}_1), \LL, P_n(\bar{t}_k)$ are all procedure calls that appear in $(D \mid S)$.
The $\vdash$ sign refers to the provability
using the remaining axioms and proof rules. 
\III

So it is allowed here that $P_i\equiv P_j$ for some $i \neq j$.  In
this rule there are $k+1$ subsidiary proofs in the premises, where $k$
is the total number of procedure calls that appear in $(D \mid S)$.
Note that the statements used on the right-hand sides of the last $k$
provability signs $\vdash$ are the corresponding effects of inlining
applied to the procedure calls on the left-hand side of $\vdash$.  In
this proof rule each procedure calls requires a separate subsidiary
correctness proof.  This results in inefficient correctness proofs.

More precisely, assuming a program $(D \mid S)$ with $k$ procedure
calls, each of the $k+1$ subsidiary proofs in the premises of the
above recursion rule can be established in the number of steps linear
in the length of $(D \mid S)$.  But $k$ is linear in the length of
$(D \mid S)$, as well, and as a result the bound on the length of the
whole proof is quadratic in the length of $(D \mid S)$.  This bound
remains quadratic even for programs with a single procedure, since $k$
remains then linear in the length of $(D \mid S)$.

Can we do better? Yes we can, by proceeding through a couple of simple
steps. First, we replace each  procedure call $P(\bar{t})$
such that $\bar{t} \neq \bar{u}$, where $\bar{u}$ are the formal parameters of $P$,
by the block statement $\block{\local \bar{u}:=\bar{t}:P(\bar{u})}$.
This gives rise to so-called \emph{pure
  programs} that only contain generic procedure calls (as introduced in Section \ref{sec:syntax}).  For pure
programs the last $k$ premises of the above recursion rule reduce to
\[
\begin{array}{l}
\HT{p_1}{P_1(\bar{u}_1)}{q_1},\ldots,\HT{p_k}{P_k(\bar{u}_k)}{q_k} \vdash \\
\qquad \HT{p_i}{\block{\local \bar{u}_i:=\bar{u}_i;\  S_i}}{q_i}, \ i \in \{1, \LL, k\}.
\end{array}
\]

The transformation of programs the into pure ones can be avoided as follows.
Incorporating in the proof system the BLOCK rule that will be introduced below,
the above $k$ premises can be reduced to
\[
\begin{array}{l}
\HT{p_1}{P_1(\bar{u}_1)}{q_1},\ldots,\HT{p_k}{P_k(\bar{u}_k)}{q_k} \vdash \\
\qquad \HT{p_i}{S_i}{q_i}, \ i \in \{1, \LL, k\} \\
\end{array}
\]
provided that $\C{\bar{u}_i} \cap free(q_i) = \ES$ for
$i\in\{1,\ldots,k\}$.

Further, the replacement of every non-generic procedure call in the
considered program by its corresponding block statement that uses a
generic call can be captured proof theoretically by means of the proof
rule
\[
\frac{\HT{p}{\block{\local \bar{u}:=\bar{t}:P(\bar{u})}}{q}}
{\HT{p}{P(\bar{t})}{q}}
\]
This rule can be further simplified to a direct instantiation of the
generic calls, using the BLOCK rule. 

Finally, recall that $k$ is the number of different procedure calls
that appear in $(D \mid S)$. But for pure programs $k \leq n$ (or
$k = n$ if each procedure declared is also called).  The resulting
proof system is described in detail in the following section.

\section{Proof system}
\label{sec:proof}

As before the set of declarations $D$ is given and fixed, so we omit the
references to it whenever possible.  We assume a usual proof system
concerned with the correctness formulas about all statements except
the block statements and the procedure calls.

We  use the following recursion rule that refers to generic calls:
\III

\NI
RECURSION 
\[
\begin{array}{l}
\HT{p_1}{P_1(\bar{u}_1)}{q_1},\ldots,\HT{p_n}{P_n(\bar{u}_n)}{q_n} \vdash \HT{p}{S}{q},                    \\
\HT{p_1}{P_1(\bar{u}_1)}{q_1},\ldots,\HT{p_n}{P_n(\bar{u}_n)}{q_n} \vdash \HT{p_i}{S_i}{q_i}, \ i \in \{1, \LL, n\} \\
[-\medskipamount]
\hrulefill                                                      \\
\hspace*{45mm} \HT{p}{S}{q}
\end{array}
\]
where $D = \{P_i(\bar{u}_i) ::S_i \mid i\in\{1,\ldots,n\}\}$
and $\C{\bar{u}_i} \cap free(q_i) = \ES$ for
$i\in\{1,\ldots,n\}$, and the $\vdash$ sign refers to the provability
using the remaining axioms and proof rules. 
\III

In the case of just one procedure this rule simplifies to the following one, in which
one draws a conclusion only about the generic procedure call:
\III

\NI
RECURSION I
\[
  \frac{\HT{p}{P(\bar{u})}{q} \vdash \HT{p}{S}{q}}
  {\HT{p}{P(\bar{u})}{q}}
\]
where $D = \{P(\bar{u}) ::S\}$
and $\C{\bar{u}} \cap free(q) = \ES$.

\III

Generic calls can be  instantiated using the following rule:
\III

\NI
PROCEDURE CALL
\[
\frac{\HT{p}{P(\bar{u})}{q}}
{\HT{p[\bar{u}:=\bar{t}]}{P(\bar{t})}{q}}
\]
where $P(\bar{u}) ::S\in D$ for some $S$ and
$\C{\bar{u}} \cap free(q) = \ES$.
\III

For block statements we use the following rule from \cite[p.~158]{ABO09}:
\III

\NI
BLOCK
\[
\frac{\HT{p}{\bar{x} := \bar{t};\  S}{q}}
{\HT{p}{\block{\local \bar{x} := \bar{t};\  S}}{q}}
\]
where $\C{\bar{x}} \cap free(q) = \ES$.
\III

We shall also need the following substitution rule from
\cite[p.~98]{ABO09} to deal with the occurrences in the assertions of
local variables of block statements and the formal parameters of
the procedure calls:
\III

\NI
SUBSTITUTION 
\[ \frac{ \HT{p}{S}{q} }  
        { \HT{p[\bar{x}:=\bar{y}]}{S}{q[\bar{x}:=\bar{y}]} } 
\]
where $\{\bar{x}\} \cap var(D \mid S )=\ES$ and
$\{\bar{y}\} \cap change(D \mid S )=\ES$.
\III


Additionally, we use the following proof rules also used in
\cite{Gor75} (though the side conditions are slightly different):
\III

\NI
INVARIANCE

\[ \frac{ \HT{r}{S}{q}           }
        { \HT{p \A r}{S}{p \A q} } 
\]
where $free(p) \cap change(D \mid S )=\ES$.
\III

\NI
$\te$-INTRODUCTION
\[ \frac{ \HT{p}{S}{q}          }
        { \HT{\te \bar{x}:p}{S}{q}    }
\]
where $\{\bar{x}\} \cap (var(D \mid S ) \cup free(q)) = \ES$. 
\III

Finally, we shall need the following rule from \cite{Hoa69}:
\III

\NI
CONSEQUENCE
\[
 \frac{ p \ra p_1, \HT{p_1}{S}{q_1}, q_1 \ra q        }
        { \HT{p}{S}{q}  }
\]
\III

We denote the above proof system by \emph{CBV} (for call-by-value) and write from now on
$\Phi \vdash_I \HT{p}{S}{q}$ instead of $\Phi \vdash_{CBV, I} \HT{p}{S}{q}$.

\begin{example} \label{exa:1}
  We provide here two representative proofs in our proof system.

\NI
$(i)$ Consider the block statement
$\block{\local u:=t; \ x:=u}$. Suppose that $x \not \in var(t)$. We prove that then 
\begin{equation}
\HT{\T}{\block{\local u:=t; \ x:=u}}{x=t}.  
\label{equ:block}
\end{equation}

By the standard ASSIGNMENT axiom and the COMPOSITION and CONSEQUENCE
rules of \cite{Hoa69} we get on the account of the assumption about
$x$ and $t$
\[
\HT{\T}{u:=t; \ x:=u}{x=t}.
\]
But we cannot now apply the BLOCK rule because $u$ can occur in $t$. So we 
introduce a fresh variable $u_0$ and proceed as follows.
Let $t' = t[u:=u_0]$. 

By the ASSIGNMENT axiom
\[
    \HT{t = t'}{u:=t}{u = t'}
\]
and
\[
  \HT{u = t'}{x:=u}{x=t'},
\]
so by the COMPOSITION rule
\[
  \HT{t = t'}{u:=t; \ x:=u}{x=t'}.
\]
Now we can apply the BLOCK rule since $u \not \in free(x=t')$ which yields
\[
  \HT{t = t'}{\block{\local u:=t; \ x:=u}}{x=t'}
\]
Using the SUBSTITUTION rule with the substitution $[u_0:=u]$
we then get (\ref{equ:block}) by the CONSEQUENCE rule 
since $(t = t')[u_0:=u]$ and $x=t'[u_0:=u] \to x = t$
are true.


  \III

  \NI
$(ii)$ To illustrate the use of the PROCEDURE
  CALL rule consider the following example.  We shall return to it in
  Section \ref{sec:final}.

  Assume the following procedure declaration:
\[
add(u):: sum:=sum + u
\]
and consider the following correctness formula:
\begin{equation}
  \label{equ:sum}
\HT{sum=z}{add(sum)}{sum=z+z}.  
\end{equation}
So the variable $sum$ is here both a global variable and an actual parameter of the
procedure call.

We prove (\ref{equ:sum}) as follows.
First, we get by the ASSIGNMENT
\[
\HT{sum+u=z+v}{sum:=sum+u}{sum=z+v},
\]
so by the CONSEQUENCE rule
\[
\HT{sum=z \land u=v}{sum:=sum+u}{sum=z+v},
\]
from which
\[
\HT{sum=z \land u=v}{add(u)}{sum=z+v}
\]
follows by the simplified version of the RECURSION I rule.
Applying the PROCEDURE CALL rule we get
\[
\HT{sum=z \land sum=v}{add(sum)}{sum=z+v}.
\]

Next, applying the SUBSTITUTION rule with the substitution $[v:=z]$ we obtain
\[
\HT{sum=z \land sum=z}{add(sum)}{sum=z+z},
\]
from which (\ref{equ:sum}) follows by a trivial application of the CONSEQUENCE rule.
\HB
\end{example}

\section{Soundness}
\label{sec:pure}

Considerations of Section \ref{sec:towards} suggest that our proof
system \emph{CBV} is closely related to the proof system \emph{ABO} of
\cite{ABO09}.  We now exploit this observation to establish 
soundness of \emph{CBV} in a more informative way.
Recall that it was proved, respectively in \cite{ABO09} and \cite{ABOG12}, 
that \emph{ABO} is sound and relatively complete.  (This result
was in fact a stepping stone in a proof of soundness and relative
completeness of a related proof system for a simple class of
object-oriented programs.)

We begin by relating the RECURSION rule to the \emph{ABO} proof
system.  Call the recursion rule used in \cite{ABO09} and discussed in
Section \ref{sec:towards} RECURSION II rule.  Recall that a program is
pure if all procedure calls in it are generic.

\begin{lemma} \label{lem:pure}
  For pure programs the RECURSION rule is a derived rule in the proof system
\textit{ABO}.
\end{lemma}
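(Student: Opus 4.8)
The plan is to show that every application of the RECURSION rule (for pure programs) can be reproduced in \emph{ABO} by a single application of the RECURSION II rule, after replacing its last $n$ premises by a short, fixed sub-derivation. Since \emph{ABO} contains the RECURSION II rule, the ASSIGNMENT axiom, the COMPOSITION rule, and (by \cite[p.~158]{ABO09}) the BLOCK rule, this suffices.

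First I would align the two rules. In a pure program $(D \mid S)$ every procedure call is generic, hence of the form $P_i(\bar{u}_i)$, and we may assume that each declared procedure actually occurs in such a call --- an unused procedure together with its body plays no role and can be dropped from both rules. Then the list of procedure calls occurring in $(D \mid S)$ is exactly $P_1(\bar{u}_1),\ldots,P_n(\bar{u}_n)$, so the assumption set $\Phi = \{\HT{p_1}{P_1(\bar{u}_1)}{q_1},\ldots,\HT{p_n}{P_n(\bar{u}_n)}{q_n}\}$ is common to both rules, the first premise of RECURSION II ($\Phi \vdash \HT{p}{S}{q}$) is literally the first premise of RECURSION, and the two rules share the conclusion $\HT{p}{S}{q}$. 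Hence it remains to derive in \emph{ABO}, from the premises of RECURSION, the remaining premises of RECURSION II, namely $\HT{p_i}{\block{\local \bar{u}_i:=\bar{u}_i;\ S_i}}{q_i}$ for $i\in\{1,\ldots,n\}$; note that the second premise of RECURSION already furnishes $\Phi \vdash \HT{p_i}{S_i}{q_i}$.

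The sub-derivation is elementary. The parallel assignment $\bar{u}_i:=\bar{u}_i$ is an identity, so $p_i[\bar{u}_i:=\bar{u}_i]$ is syntactically $p_i$ and the ASSIGNMENT axiom gives $\HT{p_i}{\bar{u}_i:=\bar{u}_i}{p_i}$; composing with $\HT{p_i}{S_i}{q_i}$ by the COMPOSITION rule yields $\HT{p_i}{\bar{u}_i:=\bar{u}_i;\ S_i}{q_i}$; and since the side condition $\C{\bar{u}_i}\cap free(q_i)=\ES$ of the RECURSION rule is assumed, the BLOCK rule applies and produces $\HT{p_i}{\block{\local \bar{u}_i:=\bar{u}_i;\ S_i}}{q_i}$. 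With all premises of RECURSION II now available in \emph{ABO}, one application of RECURSION II delivers $\HT{p}{S}{q}$, so RECURSION (and its one-procedure special case RECURSION I) is a derived rule of \emph{ABO} on pure programs.

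I expect the only delicate points to be bookkeeping rather than mathematics: verifying that ``all procedure calls appearing in $(D\mid S)$'' may indeed be identified with the $n$ generic calls $P_i(\bar{u}_i)$ in the pure case --- which is precisely why the normalization to pure programs was performed in Section \ref{sec:towards} --- and checking that the side condition needed for the BLOCK step is exactly the one already imposed by RECURSION. No semantic argument is needed beyond the inlining identity $\MSI{P(\bar{u}_i)} = \MSI{\block{\local \bar{u}_i:=\bar{u}_i;\ S_i}}$, which is implicit in the very shape of the RECURSION II premises.
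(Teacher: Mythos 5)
Your proposal is correct and follows essentially the same route as the paper: reduce the last $n$ premises of RECURSION II to those of RECURSION via the (parallel) ASSIGNMENT axiom, COMPOSITION, and the BLOCK rule (whose side condition is exactly the one RECURSION imposes), while disposing of declared-but-uncalled procedures just as the paper does with its index set $A$. No substantive difference.
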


\begin{proof}
  Fix an interpretation $I$. Suppose that we established in the proof
  system \emph{ABO} the premises of the RECURSION rule, i.e.,
\begin{equation}
  \label{equ:ass1}
\HT{p_1}{P_1(\bar{u}_1)}{q_1},\ldots,\HT{p_n}{P_n(\bar{u}_n)}{q_n} \vdash_{{ABO}, I} \HT{p}{S}{q}  
\end{equation}
and
\begin{equation}
  \label{equ:ass2}
\HT{p_1}{P_1(\bar{u}_1)}{q_1},\ldots,\HT{p_n}{P_n(\bar{u}_n)}{q_n} \vdash_{{ABO}, I} \HT{p_i}{S_i}{q_i}  
\end{equation}
for $i \in \{1, \LL, n\}$, where
$D = \{P_i(\bar{u}_i) ::S_i \mid i\in\{1,\ldots,n\}\}$ and
$\C{\bar{u}_i} \cap free(q_i) = \ES$ for $i\in\{1,\ldots,n\}$.

We need to show how to establish in \emph{ABO} the conclusion
$\HT{p}{S}{q}$ of this rule.  Let
\[
  \mbox{$A := \{i \in \{1, \LL, n\} \mid \mbox{\ the call\ } P_i(\bar{u}_i) \mbox{\ appears in\ } (D \mid S) \}$.}
\]
In the proofs in (\ref{equ:ass1}) and (\ref{equ:ass2}) the assumptions about the 
calls $P_i(\bar{u}_i)$, where $i \not\in A$, are not used, so
\begin{equation}
  \label{equ:Ass1}
  \{\HT{p_j}{P_j(\bar{u}_j)}{q_j} \mid j \in A\} \vdash_{{ABO}, I} \HT{p}{S}{q}  
\end{equation}
and
\[
  \{\HT{p_j}{P_j(\bar{u}_j)}{q_j} \mid j \in A\} \vdash_{{ABO}, I} \HT{p_i}{S_i}{q_i}  
\]
for $i \in \{1, \LL, n\}$. 
  
Fix $i\in\{1,\ldots,n\}$. From $\HT{p_i}{S_i}{q_i}$ we get
successively by the PARALLEL ASSIGNMENT axiom and the COMPOSITION rule
of \cite{Hoa69}
$\HT{p_i}{\bar{u}_i := \bar{u}_i}{p_i}$ and
$\HT{p_i}{\bar{u}_i := \bar{u}_i; \ S_i}{q_i}$, so ---thanks to the
assumption about $free(q_i)$--- we get by the BLOCK rule
\[
  \HT{p_i}{\block{\local \bar{u}_i := \bar{u}_i;\ S_i}}{q_i}.
\]
Hence
\begin{equation}
  \label{equ:Ass2}
  \{\HT{p_j}{P_j(\bar{u}_j)}{q_j} \mid j \in A\} \vdash_{{ABO}, I}   \HT{p_i}{\block{\local \bar{u}_i := \bar{u}_i;\ S_i}}{q_i}.
\end{equation}

By assumption the considered program $(D \mid S)$ is pure, so from
(\ref{equ:Ass1}) and (\ref{equ:Ass2}) for $i \in \{1, \LL, n\}$ we
derive by the RECURSION II rule $\HT{p}{S}{q}$, the conclusion of the
RECURSION rule.
\HB
\end{proof}

We noticed in Section \ref{sec:towards} that each program $(D \mid S)$
can be transformed into a pure program by replacing each non-generic
procedure call $P(\bar{t})$ by the block statement
$\block{\local \bar{u}:=\bar{t}; \ P(\bar{u})}$, where $\bar{u}$ are
the formal parameters of $P$.  Call the effect of all such
replacements the \emph{purification} of $(D \mid S)$.

Denote now by \emph{ABO1} the proof system obtained from \emph{ABO} by
augmenting it by the PROCEDURE CALL rule.

  \begin{theorem} \label{thm:pure1}
  The RECURSION rule is a derived rule in the \textit{ABO1}  proof system.
\end{theorem}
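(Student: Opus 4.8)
The plan is to reduce the new RECURSION rule to the RECURSION~II rule of \emph{ABO} (which already accommodates non-generic procedure calls), by attaching to \emph{every} procedure call occurring in $(D\mid S)$ an instantiated copy of the corresponding generic hypothesis. Concretely, if $P_i(\bar{u}_i)::S_i\in D$ carries in the premises of RECURSION the hypothesis $\HT{p_i}{P_i(\bar{u}_i)}{q_i}$, then to each call $P_i(\bar{t}\,)$ occurring in $(D\mid S)$ I associate the correctness formula $\HT{p_i[\bar{u}_i:=\bar{t}\,]}{P_i(\bar{t}\,)}{q_i}$; let $\Gamma$ be the set of all these formulas. When the call is generic ($\bar{t}=\bar{u}_i$) this is again $\HT{p_i}{P_i(\bar{u}_i)}{q_i}$, so $\Gamma$ contains all the generic hypotheses that can actually be used. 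Note that the calls occurring in $(D\mid S)$ include those inside the bodies $S_1,\dots,S_n$, so every call occurring in a body $S_i$ has its instantiated specification in $\Gamma$.

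The crux is a \emph{procedure-call-elimination lemma}: if $\HT{r}{T}{r'}$ is derivable in \emph{ABO1} from $\HT{p_1}{P_1(\bar{u}_1)}{q_1},\dots,\HT{p_n}{P_n(\bar{u}_n)}{q_n}$ and all procedure calls occurring in $T$ occur in $(D\mid S)$, then $\HT{r}{T}{r'}$ is derivable in \emph{ABO} from $\Gamma$. I would prove this by induction on the structure of the given derivation; the only delicate case is PROCEDURE CALL. There, an application inferring $\HT{a[\bar{u}_i:=\bar{t}\,]}{P_i(\bar{t}\,)}{b}$ from $\HT{a}{P_i(\bar{u}_i)}{b}$ is, in the given derivation, reached from the leaf $\HT{p_i}{P_i(\bar{u}_i)}{q_i}$ through a sub-derivation using only the structural rules (CONSEQUENCE, SUBSTITUTION, INVARIANCE, $\te$-INTRODUCTION) and possibly the RECURSION~II rule; pushing the substitution $[\bar{u}_i:=\bar{t}\,]$ uniformly through this sub-derivation reproduces $\HT{a[\bar{u}_i:=\bar{t}\,]}{P_i(\bar{t}\,)}{b}$ directly from the instantiated hypothesis $\HT{p_i[\bar{u}_i:=\bar{t}\,]}{P_i(\bar{t}\,)}{q_i}\in\Gamma$. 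Here the side conditions are essential: $[\bar{u}_i:=\bar{t}\,]$ preserves validity of the implications used in CONSEQUENCE; $P_i(\bar{u}_i)$ and $P_i(\bar{t}\,)$ have the same $change$-set, so the side condition of INVARIANCE survives; and the condition $\C{\bar{u}_i}\cap free(q_i)=\ES$ of the RECURSION rule, together with the matching side condition $\C{\bar{u}_i}\cap free(b)=\ES$ of PROCEDURE CALL, guarantees that the relevant postconditions do not mention $\bar{u}_i$ and hence are untouched by the substitution.

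Granting the lemma, the two families of premises of RECURSION~II for $(D\mid S)$, with the specifications taken from $\Gamma$, are obtained as follows. Applying the lemma to the first premise $\HT{p}{S}{q}$ of RECURSION gives a derivation of $\HT{p}{S}{q}$ in \emph{ABO} from $\Gamma$, which is the first premise of RECURSION~II. For the second family, fix a call $P_i(\bar{t}\,)$ occurring in $(D\mid S)$: applying the lemma to the premise $\HT{p_i}{S_i}{q_i}$ of RECURSION gives a derivation of $\HT{p_i}{S_i}{q_i}$ in \emph{ABO} from $\Gamma$, and then the ASSIGNMENT axiom $\HT{p_i[\bar{u}_i:=\bar{t}\,]}{\bar{u}_i:=\bar{t}}{p_i}$, the COMPOSITION rule and the BLOCK rule — legitimate because $\C{\bar{u}_i}\cap free(q_i)=\ES$ — yield, in \emph{ABO} from $\Gamma$,
\[
  \HT{p_i[\bar{u}_i:=\bar{t}\,]}{\block{\local\bar{u}_i:=\bar{t};\ S_i}}{q_i},
\]
which is exactly what RECURSION~II requires for that call. (This last step is the one already carried out in the proof of Lemma~\ref{lem:pure}, now with an arbitrary $\bar{t}$ instead of $\bar{u}_i$.) A single application of RECURSION~II to $(D\mid S)$ then derives $\HT{p}{S}{q}$, and the whole derivation lies in \emph{ABO}, a fortiori in \emph{ABO1}.

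The main obstacle I anticipate is the bookkeeping in the elimination lemma: one must check carefully that the substitution of actual for formal parameters commutes with every structural rule without breaking any of the side conditions on $free$, $var$ and $change$ — which, as indicated, hinges precisely on the two parameter-related side conditions — and one must handle the case where a RECURSION~II application appears inside the given \emph{ABO1}-derivation, propagating the substitution into its subsidiary derivations as well. Everything else is the routine ASSIGNMENT/COMPOSITION/BLOCK manipulation already present in the proof of Lemma~\ref{lem:pure}.
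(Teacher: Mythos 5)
Your overall strategy --- instantiate the generic hypotheses once per call, eliminate the PROCEDURE CALL rule from the given derivations by commuting the substitution $[\bar{u}_i:=\bar{t}\,]$ through them, and then invoke RECURSION~II directly on $(D\mid S)$ --- is genuinely different from the paper's. The paper instead replaces, inside the given derivations, every PROCEDURE CALL inference (premise $\HT{p'}{P(\bar{u})}{q'}$, conclusion $\HT{p'[\bar{u}:=\bar{t}]}{P(\bar{t})}{q'}$) by the short ASSIGNMENT/COMPOSITION/BLOCK derivation of $\HT{p'[\bar{u}:=\bar{t}]}{\block{\local \bar{u}:=\bar{t};\ P(\bar{u})}}{q'}$ \emph{from the same premise}, so that the whole proof becomes one about the purification of $(D\mid S)$, to which Lemma~\ref{lem:pure} applies; the replacements are undone at the end. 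The point of that manoeuvre is precisely that no substitution ever has to be pushed past any other rule: the premise $\HT{p'}{P(\bar{u})}{q'}$ and its derivation are kept verbatim, only the statement in the conclusion changes, and the surrounding rule applications stay valid because $change(D\mid P(\bar{t}))=change(D\mid \block{\local \bar{u}:=\bar{t};\ P(\bar{u})})$.

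Your elimination lemma has a genuine gap at exactly the step you dismiss as bookkeeping. Consider a derivation that goes from the leaf $\HT{p_i}{P_i(\bar{u}_i)}{q_i}$ via INVARIANCE with an invariant $u=0$ for $u$ in $\bar{u}_i$ (legal, since $u\notin change(D\mid P_i(\bar{u}_i))$), then via CONSEQUENCE drops $u=0$ from the postcondition, and only then applies PROCEDURE CALL with an actual parameter $t\equiv x$ where $x\in change(D)\setminus\C{\bar{u}_i}$. Pushing $[\bar{u}_i:=\bar{t}\,]$ through the INVARIANCE step would require the invariant $x=0$, whose free variable $x$ lies in $change(D\mid P_i(\bar{t}))$; the side condition of INVARIANCE fails, and the would-be intermediate formula $\HT{x=0\wedge\cdots}{P_i(x)}{x=0\wedge\cdots}$ is in fact unsound. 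Your remark that $P_i(\bar{u}_i)$ and $P_i(\bar{t})$ have the same change-set addresses the wrong quantity: what breaks is not the change-set but the free variables of the \emph{transformed} side formulas, which now contain $var(\bar{t})$. The same difficulty affects SUBSTITUTION (whose $\bar{y}$ may meet $\C{\bar{u}_i}$, so the two substitutions need not commute) and the intermediate postconditions, which --- unlike $q_i$ and the final $b$ --- are allowed to mention $\bar{u}_i$. So the induction as sketched does not go through; repairing it would amount to a semantic argument that the end formula is derivable from the instantiated hypothesis, not a rule-by-rule commutation. The purification route avoids all of this.
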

\begin{proof}
  Fix an interpretation $I$. Suppose that we established in
  \emph{ABO1} the premises of the RECURSION rule, i.e.,
  (\ref{equ:Ass1}) and (\ref{equ:Ass2}) for $i \in \{1, \LL, n\}$,
with $\vdash_{{ABO}, I}$ replaced by
  $\vdash_{{ABO1}, I}$. 

Modify these proofs as follows.  Consider an
application of the PROCEDURE CALL rule used in one of these proofs, say
\[
\frac{\HT{p'}{P(\bar{u})}{q'}}
{\HT{p'[\bar{u}:=\bar{t}]}{P(\bar{t})}{q'}}
\]
where $\C{\bar{u}} \cap free(q') = \ES$.

Replace it by the following subproof that makes use of
the PARALLEL ASSIGNMENT axiom and the COMPOSITION and BLOCK
rules:
\[
\begin{array}{l}
\HT{p'[\bar{u}:=\bar{t}]}{\bar{u} := \bar{t}}{p'}, \ \HT{p'}{P(\bar{u})}{q'} \\
[-\medskipamount]
\hrulefill                                                      \\
\HT{p'[\bar{u}:=\bar{t}]}{\bar{u} := \bar{t}; \ P(\bar{u})}{q'} \\
[-\medskipamount]
  \hrulefill \\
\HT{p'[\bar{u}:=\bar{t}]}{\block{\local \bar{u}:=\bar{t}; \ P(\bar{u})}}{q'}
\end{array}
\]

After such changes other rule applications in the assumed proofs
remain valid since
\[
change(D \mid P(\bar{t})) = change(D \mid \block{\local \bar{u}:=\bar{t}; \ P(\bar{u})}).
\]

This way we obtain proofs in \emph{ABO} of the correctness formulas
referring to the purification $(D_{pure} \mid S_{pure})$ of
$(D \mid S)$.  By Lemma \ref{lem:pure} we conclude that
$\vdash_{{ABO}, I} \HT{p}{S_{pure}}{q}$. By replacing in this proof
each introduced subproof of
$\HT{p'[\bar{u}:=\bar{t}]}{\block{\local \bar{u}:=\bar{t}; \ P(\bar{u})}}{q'}$
from the assumption \HT{p'}{P(\bar{u})}{q'} back by
the original application of the PROCEDURE CALL rule
we conclude that $\vdash_{{ABO1}, I} \HT{p}{S}{q}$.
\HB
\end{proof}


As a side effect we obtain the following conclusion.
\begin{corollary}
  The proof system \textit{CBV} sound.
\end{corollary}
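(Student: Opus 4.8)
The plan is to reduce soundness of \emph{CBV} to that of \emph{ABO}, using Theorem~\ref{thm:pure1}. Observe first that \emph{CBV} is, rule for rule, a subsystem of ``\emph{ABO1} augmented by RECURSION and RECURSION~I'': the standard axioms and rules for the statements other than blocks and procedure calls, together with CONSEQUENCE, BLOCK and SUBSTITUTION, are rules of \emph{ABO} (hence of \emph{ABO1}); PROCEDURE CALL is precisely the rule adjoined to \emph{ABO} to form \emph{ABO1}; and INVARIANCE and $\te$-INTRODUCTION, whose side conditions the text flags as slightly different from those in \cite{Gor75}, are, with the side conditions adopted here, rules of \emph{ABO} (or at least admissible in it, which can also be seen by a direct soundness check via the \textbf{Access and Change} item of Lemma~\ref{lem:sem} and the Coincidence Lemma). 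By Theorem~\ref{thm:pure1} the RECURSION rule is a derived rule of \emph{ABO1}, and RECURSION~I is its instance for $n=1$: taking the main statement to be $P(\bar{u})$ turns the first premise into the trivial judgement $\HT{p}{P(\bar{u})}{q} \vdash \HT{p}{P(\bar{u})}{q}$, and the remaining premise is exactly the premise of RECURSION~I. Consequently every correctness formula provable in \emph{CBV} is provable in \emph{ABO1}; formally this is a routine induction on the \emph{CBV}-derivation, reusing the induction hypothesis on the subsidiary derivations for every rule other than the two recursion rules, and appealing to Theorem~\ref{thm:pure1} for those.

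It then remains to see that \emph{ABO1} is sound. Since \emph{ABO} is sound by \cite{ABO09}, the only new obligation is the soundness of the PROCEDURE CALL rule. Fix an interpretation $I$, a declaration $P(\bar{u})::S$ in $D$, and assertions $p,q$ with $\C{\bar{u}} \cap free(q) = \ES$, and assume $I \models \HT{p}{P(\bar{u})}{q}$. By the ASSIGNMENT axiom $I \models \HT{p[\bar{u}:=\bar{t}]}{\bar{u}:=\bar{t}}{p}$, so by the COMPOSITION rule $I \models \HT{p[\bar{u}:=\bar{t}]}{\bar{u}:=\bar{t}; \ P(\bar{u})}{q}$; by Corollary~\ref{cor:ht}(i) (equivalently, by soundness of the BLOCK rule) $I \models \HT{p[\bar{u}:=\bar{t}]}{\block{\local \bar{u}:=\bar{t}; \ P(\bar{u})}}{q}$; and since $\MSI{P(\bar{t})} = \MSI{\block{\local \bar{u}:=\bar{t}; \ P(\bar{u})}}$ by Corollary~\ref{cor:conc}(ii), we conclude $I \models \HT{p[\bar{u}:=\bar{t}]}{P(\bar{t})}{q}$. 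Thus \emph{ABO1} is sound, and combined with the translation of the first paragraph this yields soundness of \emph{CBV}.

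The main obstacle I anticipate is not a calculation but the bookkeeping of the inductive translation in the presence of assumptions: the premises of the RECURSION rule are themselves provability judgements of the form $\HT{p_1}{P_1(\bar{u}_1)}{q_1}, \LL, \HT{p_n}{P_n(\bar{u}_n)}{q_n} \vdash \ldots$, so the induction has to be run over derivations-from-assumptions, and one must observe that being a derived rule of \emph{ABO1} is stable under adding any fixed set of extra assumptions, so that Theorem~\ref{thm:pure1} indeed applies with exactly the assumption set occurring in the RECURSION step. A secondary point worth recording is the membership in \emph{ABO} (or admissibility in \emph{ABO1}) of the \emph{CBV} versions of INVARIANCE and $\te$-INTRODUCTION with their present side conditions; this is routine given Lemma~\ref{lem:sem} and the Coincidence Lemma, but it should be stated explicitly so that the translation step for these two rules is justified.
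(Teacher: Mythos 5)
Your proposal is correct and follows essentially the same route as the paper: reduce \emph{CBV} to \emph{ABO1} (handling RECURSION, and RECURSION~I as its $n=1$ instance, via Theorem~\ref{thm:pure1}), verify the PROCEDURE CALL rule semantically, and appeal to the soundness of \emph{ABO}. The only micro-difference is that you justify PROCEDURE CALL through Corollary~\ref{cor:conc}$(ii)$ and Corollary~\ref{cor:ht}$(i)$ applied to $\block{\local \bar{u}:=\bar{t}; \ P(\bar{u})}$, whereas the paper unfolds the call to the body $S$ via Corollary~\ref{cor:ht}$(ii)$; both rest on the same semantic lemmas.
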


\begin{proof}
Consider first the PROCEDURE CALL rule.
Assume that $\C{\bar{u}} \cap free(q) = \ES$ and
that the procedure $P$ is declared by $P(\bar{u}):: S\in D$.
Then by Lemma \ref{lem:sem} and Corollary \ref{cor:ht}$(ii)$
\[
\mbox{$I \models \HT{p}{P(\bar{u})}{q}$ iff $I \models \HT{p}{\bar{u}:=\bar{u}; \ S}{q}$ iff
  $I \models \HT{p}{S}{q}$}
\]
and
\[
\mbox{$I \models \HT{p[\bar{u}:=\bar{t}]}{P(\bar{t})}{q}$ iff $I \models \HT{p[\bar{u}:=\bar{t}]}{\bar{u}:=\bar{t}; \ S}{q}$.}
\]
But by the soundness of PARALLEL ASSIGNMENT axiom
$I \models \HT{p[\bar{u}:=\bar{t}]}{\bar{u}:=\bar{t}}{p}$, so
$I \models \HT{p}{S}{q}$ implies by the soundness of the COMPOSITION
rule $I \models \HT{p[\bar{u}:=\bar{t}]}{\bar{u}:=\bar{t}; \ S}{q}$.
This establishes soundness of the PROCEDURE CALL rule.

The soundness of the RECURSION rule is now a consequence
of the soundness of the proof system \emph{ABO} (established in \cite{ABO09}),
soundness of the PROCEDURE CALL rule, and Theorem \ref{thm:pure1}.

The remaining axioms and proof rules of \emph{CBV} are the same as the corresponding rules of
\emph{ABO}, so the claim follows from the fact
that the proof system \emph{ABO} is sound.
\HB
\end{proof}

\section{Relative completeness}
\label{sec:sandc}

We now prove completeness of \emph{CBV} in the sense of Cook. To this
end, following \cite{Gor75}, we introduce the \emph{most general
  correctness specifications} (\emph{mgcs} in short). In our setup these are
correctness formulas of the form
\[
\HT{\bar{x}=\bar{z}\wedge \bar{u}=\bar{v}}{P(\bar{u})}{\exists \bar{u}:\SP_I(\bar{x}=\bar{z}\wedge \bar{u}=\bar{v},S)}
\]
where 
\begin{itemize}
\item
$P(\bar{u})::S\in D$,
\item
  $\C{\bar{x}}= change(D)\setminus \{\bar{u}\}$,

\item
  $\bar{v}$ and $\bar{z}$ are lists of fresh variables, of the same length as, respectively,
  $\bar{u}$ and $\bar{x}$.
\end{itemize}

In the second condition the removal of the set of variables listed in
$\{\bar{u}\}$ seems at first sight superfluous, since by definition
$change(P(\bar{u})::S) \cap \{\bar{u}\} = \ES$. However, each
procedure can have a different list of formal parameters, so the
variables listed in $\bar{u}$ can also have global occurrences in the
procedure bodies of other procedures, and as a result can appear in
$change(D)$.

The important difference between our mgcs and the ones used in
\cite{Gor75} is that in our case the strongest postcondition of the
call $P(\bar{u})$ is defined in terms of a strongest postcondition of
the body $S$ in which the formal parameters are quantified out.  This
has to do with the fact that we model the call-by-value parameter
mechanism, while ---as already mentioned--- in \cite{Gor75} the
call-by-name parameter mechanism is used. This difference and the fact
that we put no restrictions on the actual parameters leads to a more
complicated argument in the sequel.

Further, note that the sets of variables listed in $\bar{u}, \bar{v}, \bar{x}$ and $\bar{z}$
are mutually disjoint. 
The following technical lemma isolates the crucial implications,
notably one concerning the strongest postcondition 
$sp_I(\bar{x}=\bar{z}\wedge \bar{u}=\bar{v},S)$. They are needed in the
proof of relative completeness.

\begin{lemma} \label{lem:crucial1}
  Suppose that $I$ is such that
  $\cal L$ is expressive relative to $I$ and that
  $I \models \HT{p}{P(\bar{t})}{q}$, where $P$ is declared by
  $P(\bar{u})::S$.

  Let the lists of variables $\bar{v}, \bar{x}$ and $\bar{z}$
  be defined as above and let
\[
  Inv \equiv (p \wedge \bar{t}=\bar{v})[\bar{x}:=\bar{z}].
\]
\begin{enumerate}[(i)]
\item 
$
I \models (Inv\wedge \exists \bar{u}:\SP_I(\bar{x}=\bar{z}\wedge \bar{u}=\bar{v},S))\rightarrow q
$.

\item Suppose that the variables listed in $\bar{v}$ and $\bar{z}$ do
  not appear in $p$. Then
\[
  I \models p \to \exists \bar{v},\bar{z}: (Inv\wedge \bar{x}=\bar{z}\wedge \bar{t}=\bar{v}).
\]
\end{enumerate}
\end{lemma}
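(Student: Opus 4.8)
The plan is to prove both parts by direct semantic arguments, unfolding the definition of $\SP_I$ (which defines the strongest postcondition $sp_I(\cdot,\cdot)=\MSI{\cdot}(\BI{\cdot})$), the inlining identity of Corollary~\ref{cor:conc}$(i)$, and the \textbf{Access and Change} item of Lemma~\ref{lem:sem}. Part $(ii)$ is the easy one: given a state $\sigma$ with $\sigma\models_I p$, I would extend $\sigma$ to $\sigma'$ by resetting only the fresh variables, putting $\sigma'(\bar v):=\sigma(\bar t)$ and $\sigma'(\bar z):=\sigma(\bar x)$. Since $\bar v,\bar z$ occur neither in $\bar t,\bar x$ (freshness) nor in $p$ (the extra hypothesis in $(ii)$), the Coincidence Lemma gives $\sigma'\models_I p$, while by construction $\sigma'\models_I\bar x=\bar z\wedge\bar t=\bar v$; and since $\sigma'(\bar z)=\sigma(\bar x)$, overwriting $\bar x$ by $\sigma'(\bar z)$ leaves $\sigma'$ unchanged, so by the substitution lemma $\sigma'\models_I(p\wedge\bar t=\bar v)[\bar x:=\bar z]\equiv Inv$. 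Hence $\sigma'$ witnesses $\exists\bar v,\bar z:(Inv\wedge\bar x=\bar z\wedge\bar t=\bar v)$.

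For $(i)$, I would fix a state $\sigma$ satisfying the antecedent. Since $\SP_I(\bar x=\bar z\wedge\bar u=\bar v,S)$ defines $\MSI{S}(\BI{\bar x=\bar z\wedge\bar u=\bar v})$, from $\sigma\models_I\exists\bar u:\SP_I(\bar x=\bar z\wedge\bar u=\bar v,S)$ I can pick domain values $\bar d$ and a state $\tau$ with $\tau\models_I\bar x=\bar z\wedge\bar u=\bar v$ and $\sigma[\bar u:=\bar d]\in\MSI{S}(\tau)$. The goal then reduces to producing a single state $\rho$ with $\rho\models_I p$, $\rho(\bar u)=\sigma(\bar u)$, and $\rho[\bar u:=\rho(\bar t)]=\tau$: indeed, Corollary~\ref{cor:conc}$(i)$ gives $\MSI{P(\bar t)}(\rho)=\MSI{S}(\rho[\bar u:=\rho(\bar t)])[\bar u:=\rho(\bar u)]=\MSI{S}(\tau)[\bar u:=\sigma(\bar u)]$, which contains $\sigma[\bar u:=\bar d][\bar u:=\sigma(\bar u)]=\sigma$; and then $\rho\models_I p$ together with the hypothesis $I\models\HT{p}{P(\bar t)}{q}$ forces $\sigma\models_I q$.

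The candidate will be $\rho:=\sigma[\bar x:=\sigma(\bar z)]$. From $\sigma\models_I Inv$ the substitution lemma gives $\rho\models_I p\wedge\bar t=\bar v$, and $\rho(\bar u)=\sigma(\bar u)$ since $\bar u$ and $\bar x$ are disjoint; so everything hinges on the identity $\rho[\bar u:=\rho(\bar t)]=\tau$, i.e.\ on $\tau(\bar u)=\rho(\bar t)$ and on $\rho,\tau$ agreeing off $\{\bar u\}$. For the value of $\bar u$: $\tau(\bar u)=\tau(\bar v)=\sigma(\bar v)=\rho(\bar v)=\rho(\bar t)$, using $\tau\models_I\bar u=\bar v$, the coincidence of $\sigma$ and $\tau$ on $\bar v$ (established next), $\bar v\cap\bar x=\ES$, and $\rho\models_I\bar t=\bar v$. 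For the agreement off $\{\bar u\}$: the \textbf{Access and Change} item applied to $\sigma[\bar u:=\bar d]\in\MSI{S}(\tau)$ yields $\sigma[\bar u:=\bar d]=\tau\ {\bf mod}\ change(D\mid S)$, whence $\sigma=\tau\ {\bf mod}\ (change(D\mid S)\cup\{\bar u\})$; since $change(D\mid S)\subseteq change(D)$ (a routine induction on the structure of $S$, using that calls $P_j(\bar s)$ occurring in $S$ contribute only $change(D)\setminus\{\bar u_j\}$, and blocks only delete variables) while $change(D)\setminus\{\bar u\}=\C{\bar x}$, this gives $\sigma=\tau\ {\bf mod}\ (\C{\bar x}\cup\{\bar u\})$, so $\sigma$ and $\tau$ can differ only on $\bar x\cup\bar u$ and in particular coincide on the fresh lists $\bar v,\bar z$. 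Consequently $\rho=\sigma[\bar x:=\sigma(\bar z)]$ agrees with $\tau$ off $\bar x\cup\bar u$, and on $\bar x$ we have $\rho(\bar x)=\sigma(\bar z)=\tau(\bar z)=\tau(\bar x)$ by $\tau\models_I\bar x=\bar z$, so $\rho$ and $\tau$ agree off $\{\bar u\}$, as needed. I expect this last bit of variable-tracking --- arranging the restriction sets so that reinstating $\bar u$ and correcting $\bar x$ rebuilds exactly the intermediate state $\tau$ --- to be the only genuinely delicate point; everything else is unwinding of definitions.
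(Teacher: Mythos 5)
Your proof is correct and follows essentially the same route as the paper's: for $(i)$ your candidate $\rho=\sigma[\bar x:=\sigma(\bar z)]$ is exactly the paper's state $\sigma''$, your identity $\rho[\bar u:=\rho(\bar t)]=\tau$ is the paper's key equation, verified in the same way via Corollary~\ref{cor:conc}$(i)$, the Substitution Lemma, and the \textbf{Access and Change} item together with $change(D)\setminus\{\bar u\}=\{\bar x\}$; and $(ii)$ is handled with the same witness $\sigma[\bar v,\bar z:=\sigma(\bar t),\sigma(\bar x)]$. The only (welcome) difference is that you make explicit the inclusion $change(D\mid S)\subseteq change(D)$, which the paper uses tacitly.
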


\begin{proof}
\mbox{}

\NI
$(i)$
Take a state $\sigma$ such that
\begin{equation}
  \label{equ:u}
\sigma \models_I Inv\wedge \exists \bar{u}:\SP_I(\bar{x}=\bar{z}\wedge \bar{u}=\bar{v},S).  
\end{equation}
It follows that
\begin{equation}
  \label{equ:SP}
\sigma'\models_I \SP_I(\bar{x}=\bar{z}\wedge \bar{u}=\bar{v},S),  
\end{equation}
where $\sigma'=\sigma[\bar{u}:=\bar{d}]$ for some sequence  $\bar{d}$ of elements of the domain of $I$.

By the definition of the strongest postcondition there exists a state $\sigma_0$ such that

\begin{itemize}
\item 
$\sigma_0\models_I \bar{x}=\bar{z}\wedge \bar{u}=\bar{v}$ and

\item $\MSI{S}(\sigma_0) = \{\sigma'\}$.

\end{itemize}

Let  $\sigma''=\sigma[\bar{x}:=\sigma(\bar{z})]$. Assume first that
\begin{equation}
  \label{equ:link}
\sigma''[\bar{u}:=\sigma''(\bar{t})]=\sigma_0.
\end{equation}
Then, given that $\sigma''(\bar{u}) = \sigma(\bar{u})$ and $\sigma'[\bar{u}:=\sigma(\bar{u})] =
\sigma$, Corollary \ref{cor:conc}$(i)$ allows us to derive the following chain of equalities:
\[
\begin{array}{l}
\MSI{P(\bar{t})}(\sigma'') = \\
  \MSI{S}(\sigma''[\bar{u}:=\sigma''(\bar{t})])[\bar{u}:=\sigma''(\bar{u})] = \\
  \MSI{S}(\sigma_0)[\bar{u}:=\sigma''(\bar{u})] = \\
    \MSI{S}(\sigma_0)[\bar{u}:=\sigma(\bar{u})] = \\
  \{\sigma'[\bar{u}:=\sigma(\bar{u})]\} =\\
   \{\sigma\}
\end{array}
\]

Figure \ref{diagram} clarifies the relation between the introduced
states and illustrates the construction of a computation of
$P(\bar{t})$ that starts in the state $\sigma''$ with the final state $\sigma$
and that corresponds to the  above equalities.

\begin{figure}
  \begin{center}
\begin{tikzpicture}
\draw[thick,->] (4,1) -- (0,1) node[anchor=east]{$\sigma''$};
\draw (2,0.8) node { $[\bar{x}:=\sigma(\bar{z})] $};
\draw[thick,->] (4,1.2) -- (4,4) node[anchor=south]{$\sigma'$};
\draw (3.3,2) node[anchor=south]{$[\bar{u}:=\bar{d}] $};
\draw[thick,->] (4.2,4) -- (4.2,1.2) node[anchor=north]{$\sigma$};
\draw (4.3,2) node[anchor=west]{$[\bar{u}:=\sigma''(\bar{u})] $};
\draw[thick,->] (0,1.2) -- (0,4) node[anchor=south]{$\sigma_0$};
\draw (-1,2) node[anchor=south]{$[\bar{u}:=\sigma''(\bar{t})] $};
\draw[thick,->] (0.2,4) -- (3.8,4);
\draw (2,4.2) node {$S$};
\end{tikzpicture}
\caption{Construction of a computation of $P(\bar{t})$ that starts in $\sigma''$.} \label{diagram}
\end{center}
\end{figure}
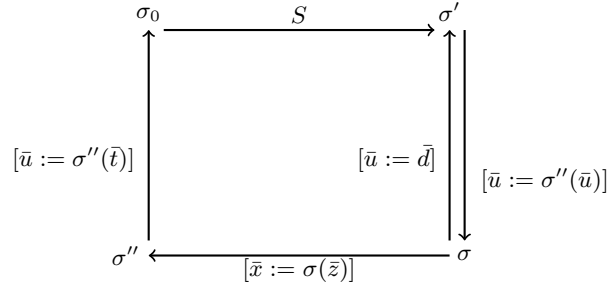

Now, by (\ref{equ:u}) $\sigma\models_I p[\bar{x}:=\bar{z}]$, so
$\sigma'' \models_I p$ by the definition of $\sigma''$.  (This is
actually the contents of the Simultaneous Substitution Lemma in
\cite[p.~50]{ABO09}.) Given that we showed that
$\MSI{P(\bar{t})}(\sigma'') = \{\sigma\}$ we conclude by the
assumption $I \models \HT{p}{P(\bar{t})}{q}$ that
$\sigma \models_I q$.  By the choice of $\sigma$ this establishes the
claim.

It remains to prove (\ref{equ:link}).  First, note the following
consequences of the \textbf{Access and Change} item of Lemma
\ref{lem:sem} that follow from the fact that the variables listed in
$\bar{v}$ and $\bar{z}$ are fresh and that by the choice of $\bar{x}$
we have $change(D) \sse \{\bar{u}\} \cup \{\bar{x}\}$:

\begin{enumerate}[(a)]
\item  $\sigma'(\bar{v}) = \sigma_0(\bar{v})$,
  
\item $\sigma'(\bar{z}) = \sigma_0(\bar{z})$,
  
\item $\sigma'(y) = \sigma_0(y)$, where $y \not\in \{\bar{u}\} \cup \{\bar{x}\}$.

\end{enumerate}

By (\ref{equ:u}) $\sigma \models_I (\bar{t}=\bar{v})[\bar{x}:=\bar{z}]$,
so by the definition of $\sigma''$ (this is, again, an instance of the
Simultaneous Substitution Lemma in \cite[p.~50]{ABO09})
$\sigma'' \models_I \bar{t}=\bar{v}$, i.e.,
\[
  \sigma''(\bar{t})=\sigma''(\bar{v}).
\]
This, item (a) above, the
fact that the sets of variables listed in $\bar{u}, \bar{v}$ and $\bar{x}$
are mutually disjoint, and the definitions of $\sigma''$ and
$\sigma'$ justifies the following chain of equalities:
$$
\sigma''[\bar{u}:=\sigma''(\bar{t})](\bar{u})=
\sigma''(\bar{t})=
\sigma''(\bar{v})=
\sigma(\bar{v}) =
\sigma'(\bar{v})  =
\sigma_0(\bar{v}) =
\sigma_0(\bar{u}).
$$

Next, the same observations and item (b) above justifies the following
chain of equalities:
$$
\sigma''[\bar{u}:=\sigma''(\bar{t})](\bar{x})=\sigma''(\bar{x})= \sigma(\bar{z})=
\sigma'(\bar{z})=\sigma_0(\bar{z})=\sigma_0(\bar{x}).
$$

Finally, take a variable $y\not\in \{\bar{u}\} \cup \{\bar{x}\}$.
Then the above observations and item (c) above justifies the following
chain of equalities:
$$
\sigma''[\bar{u}:=\sigma''(\bar{t})](y)= \sigma''(y)=\sigma(y)=\sigma'(y)=\sigma_0(y).
$$

We thus established (\ref{equ:link}), which concludes the proof.
\III

\NI
$(ii)$
%
%
Take a state $\sigma$ such that $\sigma \models_I p$.  By assumption
the variables listed in $\bar{v}$ and $\bar{z}$ do not occur in $p$, so
$\tau \models_I p$, where $\tau = \sigma[\bar{v}, \bar{z}:= \sigma(\bar{t}), \sigma(\bar{x})]$.

Note that $\tau[\bar{x}:= \tau(\bar{z})] = \tau$, so 
$\tau[\bar{x}:= \tau(\bar{z})] \models_I p$.
By the Simultaneous Substitution Lemma of \cite[p.~50]{ABO09}
$\tau \models_I p[\bar{x}:=\bar{z}]$, i.e.,
\[
\sigma[\bar{v}, \bar{z}:= \sigma(\bar{t}), \sigma(\bar{x})] \models_I p[\bar{x}:=\bar{z}].
\]

Further, for an arbitrary $\sigma$ we have
\[
\sigma[\bar{v}, \bar{z}:= \sigma(\bar{t}), \sigma(\bar{x})]
\models_I
(\bar{t} =\bar{v})[\bar{x}:=\bar{z}]   \wedge \bar{x}=\bar{z} \wedge \bar{t}=\bar{v},
\]
so
\[
\sigma[\bar{v}, \bar{z}:= \sigma(\bar{t}), \sigma(\bar{x})]  
\models_I
(p \wedge \bar{t}
  =\bar{v})[\bar{x}:=\bar{z}] \wedge \bar{x}=\bar{z} \wedge
  \bar{t}=\bar{v}.
\]
This shows that
\[
  I \models p \to \te \bar{v}, \bar{z}: ((p \wedge \bar{t}
  =\bar{v})[\bar{x}:=\bar{z}] \wedge \bar{x}=\bar{z} \wedge
  \bar{t}=\bar{v})
\]
and yields the conclusion by the definition of $Inv$.
\HB
\end{proof}

\begin{theorem}[Completeness] \label{thm:completeness}
The proof system \textit{CBV} is complete in the sense of Cook.
\end{theorem}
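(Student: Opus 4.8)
The plan is to follow the now-classical Cook-completeness argument via most general correctness specifications (mgcs), adapted to the call-by-value setting. Fix an interpretation $I$ relative to which $\cal L$ is expressive, and suppose $I \models \HT{p}{S}{q}$ for some program $(D \mid S)$. The first and central step is to show that for each procedure $P_i(\bar{u}_i) :: S_i \in D$, the mgcs
\[
  \HT{\bar{x}_i=\bar{z}_i\wedge \bar{u}_i=\bar{v}_i}{P_i(\bar{u}_i)}{\exists \bar{u}_i:\SP_I(\bar{x}_i=\bar{z}_i\wedge \bar{u}_i=\bar{v}_i,S_i)}
\]
is provable in \emph{CBV} using the RECURSION rule. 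Concretely, I would take as the hypotheses $\HT{p_j}{P_j(\bar{u}_j)}{q_j}$ of the RECURSION rule exactly the mgcs above (note the side condition $\C{\bar{u}_i}\cap free(q_i)=\ES$ holds because $q_i$ begins with $\exists \bar{u}_i$), and then show that from these hypotheses one derives $\HT{p_i}{S_i}{q_i}$ for each $i$ and, separately, $\HT{p}{S}{q}$. Both of these reduce to a single sub-lemma: \emph{from the set of all mgcs as assumptions, every true correctness formula $\HT{r}{T}{s}$ about an arbitrary statement $T$ (built from the constructs of the language, with procedure calls to procedures in $D$) is derivable in \emph{CBV}}. This sub-lemma is proved by structural induction on $T$, exactly as in the standard completeness proofs; the only genuinely new case is the procedure call $P(\bar{t})$ with arbitrary actual parameters $\bar{t}$.

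For the procedure-call case I would use Lemma~\ref{lem:crucial1} together with the PROCEDURE CALL, SUBSTITUTION, INVARIANCE, $\exists$-INTRODUCTION and CONSEQUENCE rules. Starting from the mgcs for $P$, apply the PROCEDURE CALL rule with the substitution $[\bar{u}:=\bar{t}]$ to obtain a correctness formula about $P(\bar{t})$; then use SUBSTITUTION to rename $\bar{x}$ to $\bar{z}$ inside the assertions and INVARIANCE to add the invariant $Inv \equiv (p\wedge \bar{t}=\bar{v})[\bar{x}:=\bar{z}]$ (whose free variables are disjoint from $change(D\mid P(\bar{t}))$ by the choice of the fresh variables $\bar{v},\bar{z}$ and of $\bar{x}$). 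Lemma~\ref{lem:crucial1}$(i)$ gives exactly the postcondition implication $Inv\wedge\exists\bar{u}:\SP_I(\dots)\to q$ needed to weaken the postcondition to $q$ via CONSEQUENCE, and Lemma~\ref{lem:crucial1}$(ii)$, combined with $\exists$-INTRODUCTION to quantify out $\bar{v},\bar{z}$, gives the precondition implication $p\to\exists\bar{v},\bar{z}:(Inv\wedge\bar{x}=\bar{z}\wedge\bar{t}=\bar{v})$ needed on the precondition side. Chaining these yields $\HT{p}{P(\bar{t})}{q}$ for the true triple. The remaining inductive cases — $skip$, assignment, composition, conditional, while-loop, block — are handled by the usual expressiveness argument (taking intermediate assertions to be strongest postconditions $\SPI$, which exist by expressiveness), using the standard rules plus the BLOCK rule for the block case; for the while-loop one takes $\SPI$ of the loop as the loop invariant in the standard way.

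Having established the sub-lemma, I close the argument: the premises of the RECURSION rule (the two displayed provability claims, with the mgcs as hypotheses) both follow from the sub-lemma applied to $S_i$ and to $S$ respectively, using $I\models\HT{p_i}{S_i}{q_i}$ — which holds because $\SP_I(\bar{x}_i=\bar{z}_i\wedge\bar{u}_i=\bar{v}_i,S_i)$ is by definition the strongest postcondition of $S_i$ so $\exists\bar{u}_i:\SP_I(\dots)$ is implied after the body runs — and $I\models\HT{p}{S}{q}$ by assumption. Applying RECURSION then yields $\vdash_I \HT{p}{S}{q}$. The main obstacle, and the place where care is genuinely needed, is the precise handling of the double role of variables in $P(\bar{t})$: a variable may occur in $\bar{t}$, be a formal parameter, be a global of another procedure, and appear free in $p$ or $q$ all at once. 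The freshness of $\bar{v},\bar{z}$ and the disjointness of $\bar{u},\bar{v},\bar{x},\bar{z}$ are exactly what makes the side conditions of SUBSTITUTION, INVARIANCE and $\exists$-INTRODUCTION all simultaneously satisfiable, and verifying this bookkeeping — rather than any deep idea — is the delicate part; Lemma~\ref{lem:crucial1} has already absorbed the semantic core of it.
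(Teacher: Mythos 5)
Your proposal follows the paper's own route almost verbatim: the same mgcs, the same key sub-lemma (every true triple is derivable from the set of all mgcs, by structural induction), the same treatment of the procedure-call case via Lemma~\ref{lem:crucial1} combined with PROCEDURE CALL, INVARIANCE, $\exists$-INTRODUCTION and CONSEQUENCE, and the same closing application of the RECURSION rule using the definition of the strongest postcondition to discharge the $n$ body premises. Your observation that the side condition $\C{\bar{u}_i}\cap free(q_i)=\ES$ holds because the mgcs postcondition quantifies out $\bar{u}_i$ is exactly the point the paper relies on.

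There is one place where your sketch, as written, would fail: the block-statement case of the induction, which you dismiss as "standard rules plus the BLOCK rule." The BLOCK rule carries the side condition $\C{\bar{x}}\cap free(q)=\ES$, and for an arbitrary true triple $\HT{p}{\block{\local \bar{y}:=\bar{t};\ S}}{q}$ the postcondition $q$ may well mention the block's local variables $\bar{y}$ (referring to their global values, which the block restores on exit). So the rule is not directly applicable; the paper treats this as one of the two essential cases and handles it by introducing fresh $\bar{y}'$, using INVARIANCE to conjoin $\bar{y}=\bar{y}'$, passing to the postcondition $q[\bar{y}:=\bar{y}']$ via CONSEQUENCE and Corollary~\ref{cor:ht}$(i)$, applying the induction hypothesis and the BLOCK rule, and finally undoing the renaming with the SUBSTITUTION rule. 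A smaller imprecision: in the procedure-call case you describe applying SUBSTITUTION "to rename $\bar{x}$ to $\bar{z}$"; that substitution would violate the rule's side condition (the variables of $\bar{x}$ occur in $D$), and in the paper $[\bar{x}:=\bar{z}]$ is only a syntactic substitution inside the definition of $Inv$, while the SUBSTITUTION rule is applied to the fresh variables $\bar{v},\bar{z}$ to move them out of the way of $p$, $\bar{t}$ and $q$ — which is what your closing remarks about freshness suggest you actually intended.
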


As in \cite{Gor75} (see also \cite[pp.~450-452]{Apt81b}) the proof is
based on the following key lemma.  We denote here by $G(D)$ the set of
most general correctness specifications for all procedures declared in
$D$.

\III

\NI
\begin{lemma} \label{lem:crucial}
Let $I$ be an interpretation such that $\cal L$ is expressive relative to $I$.
Suppose that $I \models \HT{p}{T}{q}$. Then
$G(D) \vdash_I \HT{p}{T}{q}$.
\end{lemma}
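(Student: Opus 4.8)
The plan is to proceed by structural induction on the statement $T$, following the classical Gorelick/Cook argument but adapted to call-by-value. For the cases where $T$ is $skip$, an assignment, a sequential composition, a conditional, a while-loop, or a block statement, I would use exactly the standard completeness arguments: the structural rules (ASSIGNMENT, COMPOSITION, IF, WHILE with $\SP_I$-based invariant, and BLOCK) together with CONSEQUENCE, invoking expressiveness to name the relevant strongest postconditions, and using the induction hypothesis on the proper substatements. The one case that carries all the novelty is the procedure call $T \equiv P(\bar{t})$ where $P$ is declared by $P(\bar{u})::S \in D$; here $I \models \HT{p}{P(\bar{t})}{q}$ is the hypothesis and the mgcs $\HT{\bar{x}=\bar{z}\wedge \bar{u}=\bar{v}}{P(\bar{u})}{\exists\bar{u}:\SP_I(\bar{x}=\bar{z}\wedge\bar{u}=\bar{v},S)} \in G(D)$ is available as an assumption.

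For the procedure-call case, the plan is to start from the mgcs for $P$ and massage it, by purely proof-theoretic means, into $\HT{p}{P(\bar{t})}{q}$, using Lemma \ref{lem:crucial1} to discharge the two implications that CONSEQUENCE will require. Concretely: first rename $\bar{v},\bar{z}$ if necessary so that they do not occur in $p$ (legitimate since they are fresh), set $Inv \equiv (p\wedge\bar{t}=\bar{v})[\bar{x}:=\bar{z}]$ as in Lemma \ref{lem:crucial1}, and note $free(Inv)\cap change(D\mid P(\bar{u}))=\ES$ because $change(D)\subseteq\{\bar{u}\}\cup\{\bar{x}\}$ while $Inv$ mentions only $\bar{z},\bar{v}$ and variables of $p$ other than $\bar{x}$ — so the INVARIANCE rule applies to the mgcs, yielding $\HT{Inv\wedge\bar{x}=\bar{z}\wedge\bar{u}=\bar{v}}{P(\bar{u})}{Inv\wedge\exists\bar{u}:\SP_I(\dots)}$. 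By Lemma \ref{lem:crucial1}$(i)$ the postcondition implies $q$, so CONSEQUENCE gives $\HT{Inv\wedge\bar{x}=\bar{z}\wedge\bar{u}=\bar{v}}{P(\bar{u})}{q}$. Next I would eliminate the formal parameters $\bar{u}$ from the postcondition's side-condition obligations and instantiate the generic call: apply the $\te$-INTRODUCTION rule to quantify $\bar{u}$ out of the precondition (legitimate since $\bar{u}\cap free(q)=\ES$ by the mgcs side condition, and $\bar{u}$ does not occur in $change$), and then the PROCEDURE CALL rule to replace $P(\bar{u})$ by $P(\bar{t})$, substituting $\bar{t}$ for $\bar{u}$ in the precondition. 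After also $\te$-introducing $\bar{z}$ and $\bar{v}$ and using Lemma \ref{lem:crucial1}$(ii)$ — which says exactly $I \models p\to\exists\bar{v},\bar{z}:(Inv\wedge\bar{x}=\bar{z}\wedge\bar{t}=\bar{v})$ — a final CONSEQUENCE step turns the precondition into $p$, giving $G(D)\vdash_I\HT{p}{P(\bar{t})}{q}$.

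The main obstacle, and the step requiring the most care, is the bookkeeping around the parameter substitution in the procedure-call case: because we place \emph{no} restriction on the actual parameters $\bar{t}$ (they may contain global variables, including ones in $change(D)$, and may even coincide with them, as in Example \ref{exa:1}$(ii)$), the interaction of the SUBSTITUTION, PROCEDURE CALL, $\te$-INTRODUCTION and INVARIANCE side conditions must be checked meticulously — in particular verifying $\{\bar{x}\}\cap var(D\mid P(\bar{u}))=\ES$ and the freshness of $\bar{v},\bar{z}$ at each rule application. This is precisely the "more complicated argument" flagged after the definition of mgcs, and Lemma \ref{lem:crucial1} is designed to absorb exactly the two semantic facts (the postcondition-implies-$q$ direction and the $p$-implies-exists direction) that cannot be obtained by syntax alone. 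The remaining structural cases, and the concluding derivation of Theorem \ref{thm:completeness} from this lemma (apply the lemma to the bodies $S_i$ to discharge the premises of RECURSION, using $G(D)$ as the assumption set, then apply RECURSION to obtain $\HT{p}{S}{q}$ from $\emptyset$), are routine.
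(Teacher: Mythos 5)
Your overall strategy (structural induction; in the procedure-call case, massage the mgcs by INVARIANCE, PROCEDURE CALL, $\te$-INTRODUCTION and CONSEQUENCE, with Lemma \ref{lem:crucial1} supplying the two semantic implications) is the paper's strategy, but the order in which you apply the rules in the procedure-call case does not go through, and the ordering is exactly where the difficulty sits. First, your step of applying $\te$-INTRODUCTION to quantify $\bar{u}$ out of the precondition is illegal: the side condition of that rule requires the quantified variables to be disjoint from $var(D \mid P(\bar{u}))$, and the formal parameters $\bar{u}$ occur both in the generic call and in $D$. Second, you weaken the postcondition to $q$ by CONSEQUENCE while the statement is still $P(\bar{u})$ and only then invoke PROCEDURE CALL; but that rule carries the side condition $\C{\bar{u}} \cap free(q)=\ES$, which an arbitrary $q$ need not satisfy. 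The paper applies PROCEDURE CALL \emph{first}, directly to the mgcs, precisely because the mgcs postcondition $\exists \bar{u}:\SP_I(\ldots)$ binds $\bar{u}$ and hence meets that side condition; only afterwards is the postcondition weakened to $q$. Third, performing INVARIANCE before PROCEDURE CALL means the substitution $[\bar{u}:=\bar{t}]$ hits a precondition that already contains $Inv$; since no restriction is placed on $p$ or $\bar{t}$, the assertion $Inv$ may contain $\bar{u}$ free, so you would obtain $Inv[\bar{u}:=\bar{t}]$ rather than the $Inv\wedge \bar{x}=\bar{z}\wedge \bar{t}=\bar{v}$ that Lemma \ref{lem:crucial1}$(ii)$ requires. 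The working order is: SUBSTITUTION (to make $\bar{v},\bar{z}$ fresh for $p$, $\bar{t}$ and $q$), then PROCEDURE CALL on the mgcs, then INVARIANCE with $Inv$, then CONSEQUENCE via Lemma \ref{lem:crucial1}$(i)$, then $\te$-INTRODUCTION on $\bar{v},\bar{z}$ (which \emph{are} disjoint from $var(D\mid P(\bar{t}))\cup free(q)$), then CONSEQUENCE via Lemma \ref{lem:crucial1}$(ii)$.

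A smaller gap: the block-statement case is not routine either. The BLOCK rule requires $\C{\bar{y}}\cap free(q)=\ES$, which fails for arbitrary $q$; the paper first passes to the postcondition $q[\bar{y}:=\bar{y}']$ for fresh $\bar{y}'$ (using the semantic invariance of $\bar{y}$ over the block and Corollary \ref{cor:ht}$(i)$ so that the induction hypothesis applies to $\bar{y}:=\bar{t};\ S$), then applies BLOCK, and finally restores $q$ with the SUBSTITUTION and CONSEQUENCE rules. You should not fold this case into the standard ones.
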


\begin{proof}
  As in \cite{Gor75} and \cite{ABOG12} we proceed by induction on the
  structure of  the statement $T$,  with two essential cases being different.
  \III

\NI
\emph{Block statements}. Suppose that
$I\models \HT{p}{\block{\local \bar{y} := \bar{t}; \ S}}{q}$.
Let $\bar{y}'$ be some fresh variables 
corresponding to the local variables $\bar{y}$.
Note that by the definition of $change(D \mid S)$
\begin{equation}
  \label{equ:y}
  \{\bar{y}\} \cap change(D \mid \block{\local \bar{y} := \bar{t}; \ S})=\ES.
\end{equation}
So from the soundness of the INVARIANCE rule it follows that
$$I\models \HT{p\wedge \bar{y}=\bar{y}'}{\block{\local \bar{y} := \bar{t}; \ S}}{q\wedge \bar{y}=\bar{y}'}.$$
Consequently,  by the CONSEQUENCE rule
$$I\models \HT{p\wedge \bar{y}=\bar{y}'}{\block{\local \bar{y} := \bar{t}; \ S}}{q[ \bar{y}:=\bar{y}']},$$
because  $q\wedge \bar{y}=\bar{y}'$  implies $q[\bar{y}:=\bar{y}']$.
From Corollary \ref{cor:ht}$(i)$ it then follows that
$$I\models \HT{p\wedge \bar{y}=\bar{y}'}{\bar{y} := \bar{t}; \ S}{q[ \bar{y}:=\bar{y}']}.$$
By the induction hypothesis 
we obtain
$$
G(D)\vdash_I  \HT{p\wedge \bar{y}=\bar{y}'}{\bar{y} := \bar{t}; \ S}{q[ \bar{y}:=\bar{y}']}.
$$
An application of the BLOCK rule then gives
$$
G(D)\vdash_I  \HT{p\wedge \bar{y}=\bar{y}'}{\block{\local \bar{y} := \bar{t}; \ S}}{q[ \bar{y}:=\bar{y}']}.
$$

Thanks to (\ref{equ:y}) we can now apply the SUBSTITUTION rule with
the substitution $[\bar{y}' := \bar{y}]$ and subsequently the
CONSEQUENCE rule to replace $p\wedge \bar{y}=\bar{y}$ by $p$. This
yields
\[
G(D)\vdash_I  \HT{p}{\block{\local \bar{y} := \bar{t}; \ S}}{q}.
\]

\III

\NI
\emph{Procedure calls}. Suppose that $I \models \HT{p}{P(\bar{t})}{q}$,
where $P$ is declared by $P(\bar{u})::S$.

Assume the most general correctness specification
\[
  \{\bar{x}=\bar{z}\wedge \bar{u}=\bar{v}\} P(\bar{u})\{\exists
  \bar{u}:\SP_I(\bar{x}=\bar{z}\wedge \bar{u}=\bar{v},S)\}.
\]

By applying the SUBSTITUTION rule to rename the variables listed in
$\bar{v}$ and $\bar{z}$ we may assume these variables not appear in
$\HT{p}{P(\bar{t})}{q}$.
By the choice of the list $\bar{x}, \bar{z}$ and
$\bar{v}$ none of them contains a variable from $\bar{u}$. So by
the PROCEDURE CALL rule with the substitution
$[\bar{u} := \bar{t}]$ we obtain 
\[
\{\bar{x}=\bar{z}\wedge \bar{t}=\bar{v}\} P(\bar{t})\{\exists \bar{u}:\SP_I(\bar{x}=\bar{z}\wedge \bar{u}=\bar{v},S)\}.
\]

As in Lemma \ref{lem:crucial1} let
\[
  Inv \equiv (p \wedge \bar{t}=\bar{v})[\bar{x}:=\bar{z}].
\]
Note that $free(Inv) \cap \{\bar{x}\}=\ES$ and by definition
$change(D \mid P(\bar{t})) = change(D)\setminus \{\bar{u}\} = \{\bar{x}\}$, so
$free(Inv) \cap change(D \mid P(\bar{t}))=\ES$.
Thus by the INVARIANCE rule 

\begin{equation}
  \label{equ:inv}
\{Inv\wedge \bar{x}=\bar{z}\wedge \bar{t}=\bar{v}\} P(\bar{t})\{Inv\wedge \exists \bar{u}:\SP_I(\bar{x}=\bar{z}\wedge \bar{u}=\bar{v},S)\}.  
\end{equation}

It remains to show that the pre- and postconditions of the above
correctness formula can be replaced, respectively, by $p$ and $q$.

First, we have by Lemma \ref{lem:crucial1}$(i)$
\[
  I \models (Inv\wedge \exists \bar{u}:\SP_I(\bar{x}=\bar{z}\wedge \bar{u}=\bar{v},S))\rightarrow q,
\]
so by the CONSEQUENCE rule we obtain from (\ref{equ:inv}) 
\[
\HT{Inv\wedge \bar{x}=\bar{z}\wedge \bar{t}=\bar{v}}{P(\bar{t})}{q}.
\]

By assumption the variables listed in $\bar{v}$ and $\bar{z}$ do not
appear in $P(\bar{t})$ or $q$, so by the $\te$-INTRODUCTION rule 
\[
\HT{\exists \bar{v},\bar{z}: (Inv\wedge \bar{x}=\bar{z}\wedge \bar{t}=\bar{v})}{P(\bar{t})}{q}.
\]
By Lemma \ref{lem:crucial1}$(ii)$
\[
  I \models p \to \exists \bar{v},\bar{z}: (Inv\wedge \bar{x}=\bar{z}\wedge \bar{t}=\bar{v}),
\]
so we obtain the desired conclusion by the CONSEQUENCE rule.

The remaining cases are as in \cite{Coo78}.
\HB
\end{proof}

\NI
\emph{Proof of the Completeness Theorem \ref{thm:completeness}.}

Suppose, as in Lemma \ref{lem:crucial}, that $I$ is an interpretation
such that $\cal L$ is expressive relative to $I$ and that
$I \models \HT{p}{S}{q}$. To prove $\HT{p}{S}{q}$ in \emph{CBV} we use
the RECURSION rule with $G(D)$ as the set of assumptions in the
subsidiary proofs.  Lemma \ref{lem:crucial} ensures the first premise.
The remaining $n$ premises also follow by this lemma.  Indeed, suppose
  \[
    G(D) =   \{\HT{\bar{x}_i =\bar{z}_i \wedge \bar{u}_i =\bar{v}_i}{P_i(\bar{u}_i)}{\exists \bar{u}_i: \SP_I(\bar{x}_i=\bar{z}_i\wedge \bar{u}_i=\bar{v}_i,S_i)} \mid i \in \{1, \LL, n\}\},
  \]
where $D = \{P_i(\bar{u}_i) ::S_i \mid i\in\{1,\LL,n\}\}$.


Choose an arbitrary $i \in \{1, \LL, n\}$. By the definition of the strongest postcondition
  \[
    I \models \HT{\bar{x}_i =\bar{z}_i \wedge \bar{u}_i =\bar{v}_i}{S_i}{\SP_I(\bar{x}_i=\bar{z}_i\wedge \bar{u}_i=\bar{v}_i,S_i)},
\]
so
  \[
    I \models \HT{\bar{x}_i =\bar{z}_i \wedge \bar{u}_i =\bar{v}_i}{S_i}{\exists \bar{u}:\SP_I(\bar{x}_i=\bar{z}_i\wedge \bar{u}_i=\bar{v}_i,S_i)}
  \]
by the soundness of the CONSEQUENCE rule. Hence by Lemma \ref{lem:crucial}
\begin{equation}
  \label{equ:ht}
G(D) \vdash \HT{\bar{x}_i =\bar{z}_i \wedge \bar{u}_i =\bar{v}_i}{S_i}{\exists \bar{u}:\SP_I(\bar{x}_i=\bar{z}_i\wedge \bar{u}_i=\bar{v}_i,S_i)}\}.
\end{equation}
We conclude now  $\vdash_{I} \HT{p}{S}{q}$ by the RECURSION rule.
\HB

\section{Length of proofs}
\label{sec:length}

In Section \ref{sec:towards} we implicitly indicated that there is a
linear bound on the length of proofs in the proof system \emph{CBV}.
We can make this claim formal by reassessing the proof of the
completeness result given in the previous section.

First, to each program $(D \mid S)$ we assign inductively a number
$l(D \mid S)$ as follows:
 
\begin{itemize}
\item $l(skip) := 1$,
  
\item $l(\bar{x}:=\bar{t}) := 1$,

\item $l(P(\bar{t})) := 1$,

\item $l(S_1 ; \ S_2) := l(S_1)+l(S_2)+1$,

\item $l(\ITE{B}{S_1}{S_2}) := l(S_1)+l(S_2)+1$,

\item $l(\WDD{B}{S}) := l(S)+1$,

\item $l(\block{\local \bar{x}:=\bar{t}; \ S}) := l(\bar{x}:=\bar{t}; \ S)+1$,

\item $l(P(\bar{u})::S) := l(S)$,

\item $l(\{P(\bar{u})::S)\} \cup D) := l(P(\bar{u})::S) + l(D)$,

\item $l(D \mid S ) :=  l(D) + l(S)$.

\end{itemize}
Further, denote by
\begin{itemize}
\item $n_a(S)$ the number of the parallel assignments occurring in $S$,

\item $n_b(S)$ the number of the block statements occurring in $S$,  
  
\item $n_p(S)$ the number of the procedure calls occurring in $S$,
  
\item $n_w(S)$ the number of the \textbf{while} loops occurring in $S$,
\end{itemize}
and let
\[
  m(S) := l(S) + n_a(S) + 3n_b(S) + 6n_p(S)  + n_w(S).
\]
Note that $l(D \mid S)$ is smaller than the length of $(D \mid S)$ viewed as a string
of characters and that $m(S) < 13l(S)$.

Further, given a set of correctness formulas $\Phi$ and an interpretation $I$ denote by
\[
\Phi \vdash^k_{I} \HT{p}{S}{q}
\]
the fact that there exists a proof of $\HT{p}{S}{q}$ in \emph{CBV}
from $\Phi$ and the set of all assertions true in $I$ that consists of
at most $k$ applications of axioms and proof rules.  We can now refine
Lemma \ref{lem:crucial} as follows.

\begin{lemma} \label{lem:length}
Let $I$ be an interpretation such that $\cal L$ is expressive relative to $I$.
Suppose that $I \models \HT{p}{T}{q}$. Then
$G(D) \vdash^{m(T)}_I \HT{p}{T}{q}$. 
\end{lemma}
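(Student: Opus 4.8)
\textbf{Proof plan for Lemma \ref{lem:length}.}

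The plan is to revisit the structural induction of Lemma \ref{lem:crucial} and, at each inductive case, carefully count the number of rule applications produced, checking that the bound $m(T)$ absorbs the cost. The weight function $m$ has been engineered precisely so that each syntactic construct carries enough ``budget'': a block statement contributes $3$ extra units beyond its length, a procedure call contributes $6$ extra, and each parallel assignment and \textbf{while} loop contributes $1$ extra. So I would proceed case by case on the structure of $T$, exactly mirroring the earlier proof, and in each case exhibit that the assembled \emph{CBV}-proof uses at most $m(T)$ steps.

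First I would dispatch the base cases: for $T \equiv skip$ and $T \equiv \bar{x}:=\bar{t}$, the standard Cook-style argument (as in \cite{Coo78}) produces the corresponding correctness formula together with one or two applications of the CONSEQUENCE rule using $sp_I$; since $l(skip)=l(\bar{x}:=\bar{t})=1$ and the assignment carries an extra $n_a=1$, the small constant number of steps fits within $m(T)$. For the compound cases $S_1;S_2$, $\ITE{B}{S_1}{S_2}$, and $\WDD{B}{S}$ I would recall the usual completeness construction (intermediate assertion is the strongest postcondition, two recursive calls glued by the COMPOSITION/IF/WHILE rule plus CONSEQUENCE applications) and verify additivity: the ``$+1$'' in each clause of $l$, together with the extra ``$+n_w$'' budget for loops, covers the gluing rule and the auxiliary CONSEQUENCE steps. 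For the block-statement case I would re-run the argument in Lemma \ref{lem:crucial} verbatim, counting the INVARIANCE, CONSEQUENCE, induction hypothesis (which costs $m(\bar{y}:=\bar{t};S) = m(S)+n_a(S)+\ldots+1$), BLOCK, SUBSTITUTION and final CONSEQUENCE applications; the ``$+1$'' from $l(\block{\ldots})$ and the ``$3n_b$'' surplus are exactly what is needed to pay for these $O(1)$ extra rules around the recursive call.

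The main obstacle, and the case deserving the most care, is the procedure call $T \equiv P(\bar{t})$. Here the proof in Lemma \ref{lem:crucial} does \emph{not} recurse into the body $S$ — instead it invokes the assumption from $G(D)$ — and then wraps it with a fixed number of rule applications: one SUBSTITUTION to rename $\bar{v},\bar{z}$, one PROCEDURE CALL with $[\bar{u}:=\bar{t}]$, one INVARIANCE, two CONSEQUENCE applications, and one $\te$-INTRODUCTION. That is $6$ rule applications built on top of the single axiom-like use of the mgcs from $G(D)$, matching the ``$6n_p$'' term and the base contribution $l(P(\bar{t}))=1$. I would emphasize that because $G(D)$ is supplied as an assumption set (not proved), the procedure-call case contributes only a constant, which is the whole reason the overall bound is linear rather than quadratic. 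Finally, I would note that once Lemma \ref{lem:length} is in place, one application of the RECURSION rule on top — using $G(D)$ as the assumption set, with the $n$ body-proofs each of size at most $m(S_i)$ by the same lemma and the main proof of size at most $m(S)$ — yields a \emph{CBV}-proof of $\HT{p}{S}{q}$ whose length is bounded by $1 + m(S) + \sum_{i=1}^n m(S_i)$, which is linear in $l(D\mid S)$ since $m(S) < 13\,l(S)$; I would state this consequence explicitly as the payoff of the lemma.
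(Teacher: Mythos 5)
Your plan is correct and follows essentially the same route as the paper: reassess the induction of Lemma \ref{lem:crucial}, count the rule applications added in each case, and check they are absorbed by the extra budget in $m$ (in particular $3$ for a block and $6$ for a procedure call, the latter being constant because the mgcs from $G(D)$ is used as an assumption). One small caution: in the block case the initial INVARIANCE and CONSEQUENCE steps of Lemma \ref{lem:crucial} are \emph{semantic} (soundness) arguments used only to establish the truth of the premise fed to the induction hypothesis, so the actual \emph{CBV}-proof adds just three rule applications (BLOCK, SUBSTITUTION, CONSEQUENCE), which is exactly what the $+3$ budget covers.
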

\begin{proof}
  We reassess the proof of Lemma \ref{lem:crucial} and count the
  number of applications of axioms and proof rules. For the case of a
  block statement note that
  \[
    m(\block{\local \bar{y} := \bar{t}; \ S_0}) = m(\bar{y} := \bar{t}; \ S_0) + 3
  \]
  and that the correctness proof of the former statement was obtained
  from a correctness proof of the latter statement by 3 rule
  applications.  In the case of a procedure call it suffices to check
  that the correctness proof consisted of 6 rule applications.

  The remaining cases follow by an inspection of the relative
  completeness proof given in \cite{Coo78}. For example, for the
    \textbf{while} statement $\WDD{B}{S}$ 
  \[
    m(\WDD{B}{S}) = m(S)+2
  \]
  and the correctness proof of the former statement was obtained from
  a correctness proof of $S$ by an application of the WHILE rule of
  \cite{Hoa69} followed by an application of the CONSEQUENCE rule.
  \HB
\end{proof}

This allows us to sharpen the Completeness Theorem \ref{thm:completeness} as follows.
\begin{theorem} \label{thm:length}
Let $I$ be an interpretation such that $\cal L$ is expressive relative to $I$.
  Suppose that $I \models \HT{p}{T}{q}$.
 Then  $\vdash^{\mathcal{O}(l(D \mid T))}_I \HT{p}{T}{q}$.
\end{theorem}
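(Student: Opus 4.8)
The plan is to combine Lemma \ref{lem:length} with the RECURSION rule and track the number of rule applications, mirroring the structure of the proof of the Completeness Theorem \ref{thm:completeness}. First I would invoke Lemma \ref{lem:length} applied to the main statement $T$: since $I \models \HT{p}{T}{q}$, we get a proof from $G(D)$ of length at most $m(T)$, which by the stated bound $m(T) < 13\, l(T)$ is $\mathcal{O}(l(T))$. This handles the first premise of the RECURSION rule. For the remaining $n$ premises, I would reuse the argument in the proof of Theorem \ref{thm:completeness}: for each $i$, the correctness formula $\HT{\bar{x}_i=\bar{z}_i\wedge \bar{u}_i=\bar{v}_i}{S_i}{\exists\bar{u}_i:\SP_I(\bar{x}_i=\bar{z}_i\wedge\bar{u}_i=\bar{v}_i,S_i)}$ is semantically valid, so Lemma \ref{lem:length} applied to $S_i$ gives a proof from $G(D)$ of length at most $m(S_i) + \mathcal{O}(1)$ — the $\mathcal{O}(1)$ absorbing the single CONSEQUENCE step that weakens $\SP_I(\ldots)$ to $\exists\bar{u}_i:\SP_I(\ldots)$.

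Next I would add up the costs. The total number of rule applications is at most
\[
1 + m(T) + \sum_{i=1}^n \bigl(m(S_i) + c\bigr)
\]
for a constant $c$, where the leading $1$ counts the concluding RECURSION application. Using $m(T) < 13\,l(T)$, $m(S_i) < 13\,l(S_i)$, and $n \le l(D\mid T)$ (since each procedure declaration contributes at least $1$ to $l(D)$ via $l(P(\bar u)::S)=l(S)\ge 1$), together with $\sum_i l(S_i) = l(D) \le l(D\mid T)$ and $l(T) \le l(D\mid T)$, this sum is bounded by a constant multiple of $l(D\mid T)$. Hence $\vdash^{\mathcal{O}(l(D\mid T))}_I \HT{p}{T}{q}$, which is the claim.

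One point needing care is that the $n$ subsidiary proofs of the premises are each carried out from the \emph{same} fixed assumption set $G(D)$, so the $\vdash^k$ counting is well defined and the costs genuinely add rather than multiply; this is exactly what the RECURSION rule (with its single shared left-hand side of $\vdash$) permits, and is the structural reason the bound comes out linear rather than quadratic as in the RECURSION II formulation discussed in Section \ref{sec:towards}. A second point is the bookkeeping in Lemma \ref{lem:length} itself: one must confirm that the recursive clauses of $m(\cdot)$ were chosen precisely so that each inductive step in the relative completeness proof of \cite{Coo78} (composition, conditional, while, and the two new cases for block statements and procedure calls) consumes at most the slack built into $m$ — e.g. $m(S_1;S_2) = m(S_1)+m(S_2)+1$ matching the single COMPOSITION step, and the coefficients $3$ and $6$ on $n_b$ and $n_p$ matching the $3$- and $6$-step subproofs identified in the proof of Lemma \ref{lem:length}.

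The main obstacle is not conceptual but arithmetical: verifying that the constants in $m(S)$ are large enough to dominate every inductive case simultaneously, and that the final summation over the procedure bodies together with the main statement really collapses to $\mathcal{O}(l(D\mid T))$ rather than hiding a factor of $n$. Since Lemma \ref{lem:length} already packages the per-statement count and the RECURSION rule contributes only one extra application on top of $1 + n$ subsidiary proofs whose sizes telescope against $l(D)$ and $l(T)$, this reduces to the elementary estimate above, and no genuinely new argument is required.
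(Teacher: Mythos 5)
Your proposal is correct and follows essentially the same route as the paper: invoke the RECURSION rule, bound each of the $n+1$ premises via Lemma \ref{lem:length}, and sum $m(T)+\sum_{i=1}^n m(S_i)$ against $13\,l(D\mid T)$ using $l(D\mid T)=l(T)+\sum_i l(S_i)$. Your extra $\mathcal{O}(1)$ per premise for a CONSEQUENCE step is harmless but not needed, since in the completeness proof the weakening to $\exists\bar{u}_i:\SP_I(\ldots)$ is a semantic step and Lemma \ref{lem:length} is applied directly to the already-weakened formula.
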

\begin{proof}
  We reassess the proof of the Completeness Theorem
  \ref{thm:completeness}. The correctness formula $\HT{p}{T}{q}$ was
  established using the RECURSION rule.  By Lemma \ref{lem:length} to
  prove the premises of this rule it takes
at most $m(T) + \Sigma_{i=1}^{n} m(S_i)$ steps, where $n$ is the
number of procedure declarations in $D$ and $S_1, \LL, S_n$ are the
bodies of the procedures declared in $D$.  But $m(S) < 13l(S)$ and
$l(D \mid T) = l(T) + \Sigma_{i=1}^{n} l(S_i)$, so the claim follows.
\HB
\end{proof}

\section{Final remarks} \label{sec:final}

\subsection{Related work}

Let us compare now our relative completeness result with the work
reported in the literature.

As already mentioned in the Introduction, the first sound and
relatively complete proof system for programs with local variables and
recursive procedures was provided in \cite{Gor75}. But the paper
assumed the call-by-name parameter mechanism and, as explained in
\cite[pp.~459-460]{Apt81b}, dynamic scope was assumed.
The relative completeness
result also assumed some restrictions on the actual parameters in the
procedure calls that were partly lifted in \cite{CartwrightOppen81}.
However, the restriction that global variables do not occur in the actual
parameters is still present there.

In \cite{Bak79} and in more detail in \cite[Section 9.4]{Bak80} a
proof system concerned with the recursive procedures with the
call-by-value and call-by-variable (present in {\sc Pascal}) parameter
mechanisms in presence of static scope was proposed and its soundness
and relative completeness was proved.  However, the correctness of
each procedure call had to be proved separately.  Further, the
relative completeness was established only for the special case of a
single recursive procedure with a single recursive call. For the case
of two recursive calls a list of 14 cases was presented in
\cite[Section 9.4]{Bak80} that should be considered, but without the
corresponding proofs.  The main ideas of this proof were discussed in
\cite{Apt81b}.  The case of a larger number of recursive calls and a
system of mutually recursive procedures were not analyzed because of
the complications resulting from an accumulation of cases resulting
from the use several variable substitutions.

In \cite{Ohe99} a sound and relatively complete proof system was
discussed for a programming language with local and global variables
and mutually recursive procedures with the call-by-value mechanism,
that allows for both static and dynamic scope. The proofs were
certified in the Isabelle theorem prover. The details are very
sketchy, but from the presentation it is clear that also here
correctness of each procedure call has to be dealt with separately.

Further, in the presentation the assertions are identified with their
meaning.  This avoids various complications concerned with possible
variable clashes (in particular, it obviates the use of the
SUBSTITUTION rule). This simplifies the arguments about soundness and
relative completeness, but results in a highly inconvenient approach
to program verification.

To see the difference, take the correctness formula
$\{\T\} x:=0\{x=0\}$.  In Hoare's logic it follows directly by the
ASSIGNMENT axiom
and a trivial application of the CONSEQUENCE rule.  In the approach of
\cite{Ohe99} (that is common for papers on Hoare's logic in the
framework of the Isabelle system) one abstracts from the underlying
logic and uses following assignment axiom:
\[
  \{\lambda\sigma. P(\sigma[x:=0])\} x:=0 \{P\},
\]
for any $P\in \Sigma \rightarrow \mathbb{B}$, where $\Sigma$ denotes
the set of states and
$\mathbb{B} = \{\mathbf{true}, \mathbf{false}\}$.  Semantically, the
substitution $[x :=0]$ is then represented by the function
$\lambda\sigma.\sigma(x)=0$.  Instantiating the above assignment axiom
we obtain
\[
\{\lambda\sigma. \sigma[x:=0](x)=0\} x:=0 \{\lambda\sigma.\sigma(x)=0\}.
\]
Unraveling the update function yields
$\{\lambda\sigma. \T\} x:=0 \{\lambda\sigma.\sigma(x)=0\}$.  Clearly,
this is very cumbersome in practice, also if one wants to use
  assertions as loop invariants or to specify procedures' behaviour.
  In contrast, Hoare's logic is a convenient formalism for such
  purposes.

As already mentioned in Section \ref{sec:pure}, in \cite{ABO09} and
\cite{ABOG12} a proof system \emph{ABO} was proposed for the
programming language here considered and shown to be sound and
relatively complete. (It should be mentioned that the notion of
expressiveness used in \cite{ABOG12} refers to the definability of the
weakest precondition as opposed to the strongest postcondition, as
used here.)  However, as already explained in Section
\ref{sec:towards}, correctness of each procedure call has to be dealt
with \emph{ABO} separately and as a result the correctness proofs in
this proof system are only quadratic in the length of the programs,
even in the presence of just one recursive procedure.

In \cite{ABO09} an attempt was made to circumvent this inefficiency by
proposing a proof rule similar to our PROCEDURE CALL rule so that one
could use a recursion rule only for generic calls. However, even in
the absence of recursive calls, the proposed rule was not applicable
to all procedure calls due to the imposed restrictions.  More
specifically, the proof rule had the following form:
\III

\NI
PROCEDURE CALL I
\[
\frac{\HT{p}{P(\bar{x})}{q}}
{\HT{p[\bar{x}:=\bar{t}]}{P(\bar{t})}{q[\bar{x}:=\bar{t}]}}
\]
where $var(\bar{x}) \cap var(D) = 
var(\bar{t}) \cap change (D) = \ES$.
\III

It was then noted there that the stipulated conditions make it
impossible to derive the correctness formula (\ref{equ:sum}) of
Example \ref{exa:1}$(ii)$ of Section \ref{sec:proof} because the actual
parameter contains a variable present in $change(D)$.  Example
\ref{exa:1} explains how this problem can be resolved using the
PROCEDURE CALL rule in combination with the SUBSTITUTION rule.

We conclude  that the relative completeness result presented here is new.
It is useful to discuss how we dealt with the complications
encountered in the reported papers. 

In the relative completeness proofs in \cite{Coo78}, \cite{Gor75} and
\cite{Bak80} the main difficulties had to do with the local
variables, the use of which led to some variable renamings, and the
clashes between various types of variables, for example formal parameters and
global variables, that led to some syntactic restrictions.

In fact, the original paper of Cook, \cite{Coo78} contains an error
that was corrected in \cite{Coo81}. It is useful to discuss it (it was
actually pointed out by one of us) in some detail.
In the semantics adopted in \cite{Coo78} local variables were modelled
using a stack in which the last used value was kept on the stack and
implicitly assigned to the next local variable. As a result the
following correctness formula is true in any interpretation:
\[
 \HT{\T}
  {\block{\local x; \ x := 1}; \ \block{\local x; \ y := x}}
  {y = 1}.   
\]

However, there is no way to prove it. In the paper of Gorelick
\cite{Gor75} this error does not arise, since all variables are
explicitly initialized to a given in advance value, both in the
semantics and in the proof theory (by adjusting the precondition of
the corresponding declaration rule).

However, this problem does arise in the framework of
\cite{CartwrightOppen81}, where the authors write on page 371:

\begin{quote}
  ``As Apt (personal communication) has observed, this [the
  declaration] rule is incomplete because it does not allow one to
  deduce the values of new variables on block entry. There are several
  possible solutions to this technical problem but they are beyond the
  scope of this paper.''
\end{quote}

In our framework this problem cannot occur because of the explicit
initialization of the variables in the block statement. However, our
initialization is more general than that of Gorelick. For example, in
the statement $\block{\local u:=u; \ S}$ the local variable $u$ is
initialized to the value of the global variable $u$.  Consequently,
according to our semantics, and also the semantics used in \cite{Coo78},
the following correctness formula is true in any interpretation:
\begin{equation}
  \label{equ:local}
 \HT{\T}  
{\block{\local u:=u; \ x:=u}; \ \block{\local u:=u; \ y:=u}}
{x=y}.
\end{equation}
This correctness formula cannot be proved in the proof system used \cite{Coo78}.
However, we can prove it in our proof system as follows.
As shown in Example \ref{exa:1}$(i)$
\[
  \HT{\T}{\block{\local u:=u; \ x:=u}}{x=u}
\]
and
\[
  \HT{\T}{\block{\local u:=u; \ y:=u}}{y=u}.
\]
Applying to the last correctness formula
the INVARIANCE rule we get
\[
  \HT{x=u}{\block{\local u:=u; \ y:=u}}{x=u \land y=u}.
\]
(\ref{equ:local}) now follows by the COMPOSITION and
CONSEQUENCE rules.

\subsection{Summary of our approach}

We established in Section \ref{sec:sandc} the Completeness Theorem
\ref{thm:completeness} for all programs defined in Section
\ref{sec:syntax}. As explained in Section \ref{sec:semantics}
semantics of these programs follows dynamic scope. Further, for the
natural subset of clash-free programs, introduced in Definition
\ref{def:clash-free}, static and dynamic scope coincide.  In other
words, for clash-free programs we have a sound and relatively complete
proof system in presence of static scope and in which correctness
proofs are linear in the length of the programs.

It is also useful to note that the proof of relative completeness
proceeds by induction on the structure of the programs and for
clash-free programs the inductive reasoning remains within the set of
these programs.  The only place in the proof that requires some
explanation is the case of the block statements, for which it suffices
to note that if the program
$(D \mid \block{\local \bar{u}:=\bar{t}; \ S})$ is clash free, then so
is $(D \mid \bar{u}:=\bar{t}; \ S)$.

Let us discuss now how we succeeded to circumvent the complications
reported above.  We achieved it by various design decisions concerning
the syntax and semantics, which resulted in a simple proof
system. Some of these decisions were already taken in \cite{ABO09} and
used in \cite{ABOG12}.  More precisely, we used there the block
statement in which the local variables can be initialized to any
expression. This, in conjunction with the parallel assignment, allowed
us to model procedure calls in a simple way, by inlining.  In the
present paper the problems concerning variable clashes were taken in a
more general way by singling out the class of clash-free programs, for
which the provided semantics ensures static scope (that coincides then
with the dynamic scope).  This class of programs was first considered
in \cite{CartwrightOppen81}, where it is introduced on page 372 in a
somewhat informal way:

\begin{quote}
''[$\dots$] we assume that our PASCAL subset [$\dots$] requires that
the global variables accessed by a procedure be explicitly declared at
the head of the procedure and that these variables be accessible at
the point of every call.''
\end{quote}

Crucially, our semantics the block statement does not require any
variable renaming (see the \textbf{Block} item of Lemma
\ref{lem:sem}).  This leads to a simple semantics of the
procedure calls, without any variable renaming either
(see the \textbf{Inlining} item of Lemma \ref{lem:sem}).

As a result, in contrast to all works in the literature, our
proof rule dealing with local variables (the BLOCK rule) uses no
substitution. In contrast, in \cite{Hoa71} the substitution
is applied to the program, while in \cite{Coo78} and \cite{Gor75} it
is applied to the assertions. Further, in contrast to \cite{Bak80},
our RECURSION rule does not involve any substitution in the procedure
body.  This allowed us to circumvent the troublesome combinatorial
explosion of the cases encountered in the relative completeness proof
of \cite{Bak80}.

However, the key improvement over \cite{ABO09} and \cite{ABOG12}, is that our
RECURSION rule deals only with the generic calls and ---crucially---
the PROCEDURE CALL rule does not impose any restrictions on the actual
parameters.  The latter is in contrast to all works that dealt with
the call-by-name parameter mechanism.  Thanks to this improvement, in
contrast to the above two works, in our proof system the correctness
proofs are linear in the length of the program.

\subsection{Conclusions}

The already mentioned work of \cite{Cla79} led to a research on proof
systems for programming languages in which in presence of static scope
both nested declarations of mutually recursive procedures and
procedure names as parameters of procedure calls were allowed. In
particular in \cite{Old84}, \cite{DJ83} and \cite{GCH89} sound and
relatively complete Hoare-like proof systems were established, each
time under different assumptions concerning the assertion language. In
all these papers the call-by-name parameter mechanism was used
and the actual parameters were variables.
It would be interesting to see how to modify these proofs to
programming languages in which the call-by-value parameter mechanism
is used instead of the call-by-name.

Finally, it would be useful to extend (in an appropriate sense) the
results of this paper to total correctness of programs.  In this
context we should mention \cite{AB90}, where a sound and relatively
complete (in an appropriate sense) proof system for total correctness
of programs in presence of recursive procedures was provided.
However, in the paper only procedures without parameters were
considered.


\subsection*{Acknowledgement}

We thank Ernst-R\"{u}diger Olderog and Stijn de Gouw for helpful
comments about the contributions of a number of relevant references.

\bibliographystyle{abbrv}
\bibliography{ao,new,abo}

\end{document}